\newtheorem{prop}{Proposition}[section]
\newtheorem{lemma}[prop]{Lemma}
\newtheorem{definition}[prop]{Definition}
\newtheorem{theorem}[prop]{Theorem}
\theoremstyle{remark}
\newtheorem{rem}{Remark}[section]
\renewcommand{\simeq}{\cong}
\begin{document}

\def\mytitle{All Stable Characteristic Classes \\of Homological Vector Fields}

\pagestyle{myheadings}
\markboth{\textsc{\small E. Mosman and A. Sharapov}}{%
  \textsc{\small All Stable Characteristic Classes }}
\addtolength{\headsep}{4pt}

\begin{centering}
.

  \vspace{4cm}

  \textbf{\Large{\mytitle}}

  \vspace{1.5cm}

  {\large Elena Mosman and Alexey Sharapov}

\vspace{.5cm}

\begin{minipage}{.9\textwidth}\small \it \begin{center}
   Department of Quantum Field Theory, Tomsk State University,\\
   Lenin ave. 36, 634050 Tomsk, Russia
   \end{center}
\end{minipage}

\end{centering}

\vspace{1cm}

\begin{center}
  \begin{minipage}{.9\textwidth}
    \textsc{Abstract}. An odd vector field $Q$ on a supermanifold
    $M$ is called homological, if $Q^2=0$. The operator of Lie
    derivative $L_Q$ makes the algebra of smooth tensor fields on $M$ into a
    differential tensor algebra. In this paper, we give a complete
    classification of certain invariants of homological vector fields
    called characteristic classes. These take values in the
    cohomology of the operator $L_Q$ and are represented by
    $Q$-invariant tensors made up of the homological vector field and
    a symmetric connection on $M$ by means  of the algebraic tensor operations
    and covariant differentiation.
  \end{minipage}
\end{center}


\vfill

\noindent \mbox{}
\raisebox{-3\baselineskip}{%
  \parbox{\textwidth}{\mbox{}\hrulefill\\[-4pt]}}
{\scriptsize We are thankful to anonymous referees for useful
remarks. The work was partially supported by the RFBR grant
09-02-00723-a, by the grant from Russian Federation President
Programme of Support for Leading Scientific Schools no 871.2008.02,
and also by Russian Federal Agency of Education under the State
Contracts no P1337, no P2596 and no P22. EM appreciates financial
support from Dynasty Foundation. }

\thispagestyle{empty}
\newpage

\section{Introduction}

The unique existence theorem for solutions of ordinary
differential equations ensures integrability of smooth
one-dimensional distributions on differentiable manifolds. A new
phenomenon arises in the category of smooth supermanifolds: Due to
an extra sign factor in the definition of the supercommutator of
vector fields, the classical Frobenius criterion of integrability
\begin{equation}\label{integrability}
[Q,Q]=0
\end{equation}
is not fulfilled automatically for odd vector fields $Q$. Rather it
becomes a nontrivial condition to satisfy, $[Q,Q]=2Q^2=0$.  An odd
vector field $Q$ that squares to zero is called a
\textit{homological vector field}. The homological vector fields
were first introduced by Shander \cite{Shand} in his study of
differential equations on supermanifolds. The local normal forms of
homological vector fields were then considered by Schwarz \cite{S}
and Vaintrob \cite{Vaintrob2}. In the former  paper it was proposed
to refer to supermanifolds with homological  vector fields as
\textit{$Q$-manifolds}. The $Q$-manifolds play a prominent  role
both in physics and mathematics. In theoretical physics, the
homological vector fields appear usually as classical BRST
differentials on the ghost-extended configuration/phase spaces of
gauge theories \cite{S}, \cite{HT}. On the other hand, various
mathematical concepts can be reformulated and studied in terms of
$Q$-manifolds. An incomplete list of examples includes de Rham and
Koszul complexes, $L_{\infty}$-algebras \cite{Kon}, \cite{LSt},
rational homotopy types \cite{Sull}, Lie algebroids
\cite{Vaintrob1}, and $n$-algebroids \cite{Roy}, \cite{Sev}. The
advantage of the ``homological'' point of view over traditional ones
is its geometric clarity and flexibility. Notice that the
$Q$-manifolds form a category, whose morphisms are just
diffeomorphisms of supermanifolds that relate homological vector
fields. Having translated some class of mathematical objects in the
language of $Q$-manifolds we get a natural definition of morphisms
for the objects of interest, which may be hard to see or formulate
in a classical (i.e., non-homological) approach. A typical example
is the category of Lie algebroids where the homological approach
offers a concise and elegant formulations for such important notions
as a Lie algebroid homomorphism and an adjoint module
\cite{Vaintrob1}. From this perspective it is desirable to have a
structure theory of $Q$-manifolds, which would capture both the
local and \textit{global} properties of homological vector fields.

In this paper, we study the global  invariants of $Q$-manifold
called \textit{characteristic classes} \cite{LS}, \cite{LMS1},
\cite{LMS2}. The idea behind the construction of such invariants is
as follows. Given a $Q$-manifold $(M,Q)$, we denote by
$\mathcal{T}(M)$ the algebra of smooth tensor fields on $M$ of
arbitrary types $(n,m)$. The operator of Lie derivative $\delta=L_Q$
makes the algebra $\mathcal{T}(M)=\bigoplus \mathcal{T}^{n,m}(M)$
into a differential tensor algebra. Let
$H(M,Q)=\mathrm{Ker}\delta/\mathrm{Im}\delta$ denote the group of
$\delta$-cohomology. Since $\delta$ respects the tensor operations
-- tensor product, contraction and permutation of tensor indices --
the space $H(M,Q)$ inherits the structure of tensor algebra. The
algebra $H(M,Q)$ is thus a natural invariant of the $Q$-manifold
$M$\footnote{Of course, if we want to treat $H$ as a functor from
the category of $Q$-manifolds to the category of tensor algebras,
then we should restrict ourselves to the subalgebra of covariant
tensor fields on $M$.}. Unfortunately, this invariant is hard to
compute even in a topologically trivial situation. This is due to
possible local singularities of the homological vector field. The
way out is to consider a special differential subalgebra
$\mathcal{A}\subset \mathcal{T}(M)$ called the \textit{algebra of
concomitants}. Given a symmetric affine connection $\nabla$ on $M$
with curvature $R$, by a \textit{concomitant} associated to the
triple $(M,Q,\nabla)$ we understand a tensor field on $M$ which is
made up of the homological vector field $Q$, the curvature tensor
$R$, and their covariant derivatives by means of the tensor
operations. According to the classical reduction theorem \cite{Sch}
the concomitants exhaust all the natural tensor fields associated to
$Q$ and $\nabla$. Also the  set of all concomitants is invariant
under the action of $\delta$. We say that a $\delta$-closed
concomitant $\mathcal{C}\in \mathcal{A}$ is a \textit{universal
cocycle} if the closedness condition $\delta \mathcal{C}=0$ follows
from the integrability condition (\ref{integrability}) regardless of
any specificity of $Q$, $\nabla$ and $M$. In other words, the
universal cocycles are universal $Q$-invariant tensor polynomials in
$\nabla^nQ$ and $\nabla^mR$ that can be attributed to \textit{any }
$Q$-manifold with connection. (This, a little  bit vague, definition
of ``universality'' can be made precise using the notion of a
\textit{graph complex} associated to the differential algebra of
concomitants $\mathcal{A}$, see Section \ref{GrCom}.) The
\textit{stable} characteristic classes of $Q$-manifolds are now
defined to be the elements of $H(M,Q)$ that are represented by the
universal cocycles\footnote{We use the adjective ``stable'' to
emphasize that there may exist other characteristic classes which
are specific to $Q$-manifolds of any particular dimension, see
Remark \ref{rem} below.}. A remarkable fact \cite{LMS2} is that the
$\delta$-cohomology classes of universal cocycles do not depend on
the choice of the symmetric connection and hence they are invariants
of a $Q$-manifold as such. Furthermore, the nontrivial universal
cocycles admit a fairly explicit description in contrast to the
group $H(M,Q)$.

The algebra of concomitants is naturally graded,
$\mathcal{A}=\bigoplus \mathcal{A}_{n,m}^k$; here the subscripts
$(n,m)$ refer to the tensor type of concomitants, while the
superscript $k$ is the degree of homogeneity in $Q$ and its
derivatives:
\begin{equation*}
\mathcal{A}^k\ni \mathcal{C}[Q]\quad \Leftrightarrow\quad
t^k\mathcal{C}[Q]=\mathcal{C}[tQ]\qquad \forall t\in \mathbb{R}\,.
\end{equation*}
Since $\delta: \mathcal{A}_{n,m}^k\rightarrow
\mathcal{A}_{n,m}^{k+1}$, we have a direct sum of finite-dimensional
complexes. The stable characteristic classes form a subgroup
$H_{\mathrm{st}}(\mathcal{A})$ in the $\delta$-cohomology group
$H(\mathcal{A})=\bigoplus H(\mathcal{A}^k_{n,m})$. The computation
of the groups $H(\mathcal{A})$ and $H_{\mathrm{st}}(\mathcal{A})$ is
somewhat facilitated by the fact that the algebra of concomitants
contains a differential ideal $\mathcal{R}\subset \mathcal{A}$
generated by the concomitants of the symmetric connection
$\{\nabla^m R\}$. The corresponding short exact sequence of
complexes
\begin{equation*}
\xymatrix{0
\ar[r]&{\mathcal{R}}\ar[r]^-{i}&{\mathcal{A}}\ar[r]^-{p} &
{\mathcal{A}}/{\mathcal{R}}\ar[r]&0}
\end{equation*}
gives rise to the exact triangle in cohomology
\begin{equation*}
\xymatrix{H(\mathcal{R})\ar[rr]^{i_\ast}& & H(\mathcal{A}) \ar[dl]^{p_\ast}\\
&H(\mathcal{A}/\mathcal{R})\ar[ul]^{\partial}& }
\end{equation*}
Geometrically, one can view $H(\mathcal{A}/\mathcal{R})$ as the
space of characteristic classes of flat $Q$-manifolds, i.e.,
$Q$-manifolds admitting a flat symmetric connection. The universal
cocycles of flat $Q$-manifolds are constructed from the
$(1,n)$-tensors $\nabla^n Q$, which are symmetric in lower indices.
If the connecting homomorphism $\partial$ is nonzero, not any
characteristic class can be extended from the flat to arbitrary
$Q$-manifolds and the obstruction to extendability is controlled by
the elements of $\mathrm{Im}\,\partial$. In \cite{LMS2}, the
extendable characteristic classes, or more precisely the elements of
$H(\mathcal{A})/\mathrm{Im} i_\ast$, were called \textit{intrinsic}.
The intrinsic characteristic classes survive on flat $Q$-manifolds,
therefore they are more closely related to the structure of the
homological vector field rather than the topology of $M$. In the
stable situation, both the intrinsic characteristic classes and the
characteristic classes of flat $Q$-manifolds were explicitly
computed in \cite{LMS2}. What has remained an open question is
whether there are nontrivial universal cocycles lying in
$\mathcal{R}$. In this paper, we show that the answer is negative so
that all the stable characteristic classes  are in fact intrinsic.
The proof of this fact is given in Section \ref{5} and does not
exploit the short exact sequence above; instead, we use a special
generating set for the algebra of concomitants that reduces the
problem to the computation of a certain graph cohomology. Together
with the results of \cite{LMS2} this gives the complete
classification of stable characteristic classes of $Q$-manifolds.

\subsection*{Conventions and notation}

Throughout the paper we work in the category of smooth
supermanifolds.  This allows us to omit the boring prefix
``super'' whenever possible. So the terms like  manifolds,
functions, algebras and so on will actually mean the corresponding
notions of supergeometry.

Given a manifold $M$, we denote by $\frak X(M)$ the space of
smooth vector fields on $M$. The space $\frak X(M)$ carries both
the structure of a real Lie algebra with respect to the commutator
of vector fields and the structure of a $C^\infty(M)$-module. The
endomorphisms of the module $\frak{X}(M)$ form an associative
algebra $\frak{A}(M)$ over $C^\infty(M)$. The elements of
$\frak{A}(M)$ are smooth tensor fields of type $(1,1)$. The
operation of contraction of tensor indices endows $\frak{A}(M)$
with the natural trace $\mathrm{Str}:\frak{A}(M)\rightarrow
C^\infty(M)$. We let $\Omega(M)=\bigoplus \Omega^n(M)$ denote the
algebra of exterior differential forms on $M$.

 In our study of the
differential tensor algebra of concomitants $\mathcal{A}$ we will
mostly deal with the smooth tensor fields of type $(1,n)$. Any
such tensor field is naturally identified with a
$C^\infty(M)$-linear map from $\frak{X}(M)^{\otimes n}$ to
$\frak{X}(M)$ and, in the sequel, we will freely use this
identification. We let $O_S$ denote the fully symmetric part of a
$(1,n)$-tensor $O$ so that $(O_S)_S=O_S$. Then $O_S$ defines a map
from the symmetrized tensor power $\frak{X}(M)^{\odot n}$ to
$\frak{X}(M)$. Given a $(1,n)$-tensor field $O(X_1,\ldots,X_n)$,
we denote by $(\nabla^k
O)(X_1,\ldots,X_k)(X_{k+1},\ldots,X_{k+n})$ the value of the $k$th
covariant derivative of $O$ evaluated on $k+n$ vector fields
$X_1,\ldots,X_{k+n}$. To avoid any ambiguity concerning sign
factors let us write an explicit expression in terms of local
coordinates,
\begin{equation*}
(\nabla^k O)(X_1,\ldots
,X_k)(X_{k+1},\ldots,X_{k+n})=(-1)^{\varepsilon}
X_{k+n}^{i_{k+n}}\cdots X_1^{i_1}
\nabla_{i_1}\cdots\nabla_{i_k}O^j_{i_{k+1}...i_{k+n}}\frac{\partial}{\partial
x^j}\,,
\end{equation*}
\begin{equation*}
\varepsilon=\sum_{a}\epsilon(X_a)(\epsilon_{i_{a+1}}+\cdots+\epsilon_{i_{k+n}})\,.
\end{equation*}
All covariant or partial derivatives are assumed to act from the left.

\section{The Losik-Janyska-Markl basis of concomitants}

The elementary concomitants $\{\nabla^n Q$, $\nabla^m R\}$
generating the algebra $\mathcal{A}$ are not free; rather they
satisfy an infinite number of tensor relations coming from the
integrability condition for $Q$ and the Bianchi-Ricci identities
for $\nabla$. To take into account these relations we pass on to
another generating set of concomitants, which, similar to the set
of elementary concomitants,  consists of two parts. One part was
introduced, in fact, by Losik in the context of Gelfand-Fuks
cohomology \cite{Lo}, while the other appeared in the recent paper
by Janyska and Markl \cite{JM}.

Define the curvature tensor of $\nabla$ by
\begin{equation}\label{curvature}
R(X_1,X_2,X_3) =
([\nabla_{X_3},\nabla_{X_2}]X_1-\nabla_{[X_3,X_2]}X_1)\,.
\end{equation}
Losik's part of the generating set concerns the concomitants that
involve the covariant derivatives of $Q$. Following \cite{Lo}, we
introduce a sequence of symmetric $(1,n)$-tensors $Q_n:
\frak{X}(M)^{\odot n}\rightarrow \frak{X}(M)$ of the form
\begin{equation}\label{Q-type}
Q_n(X_1,\ldots,X_n)=
(\nabla^{n}Q)_S(X_1,\ldots,X_n)-(\nabla^{n-2}R_Q)_S(X_1,\ldots,X_{n-2})(X_{n-1},X_n)\,,
\end{equation}
where the $(1,2)$-tensor  $R_Q$ is defined by
\begin{equation*}
R_Q(X_1,X_2)=(-1)^{\epsilon(X_1)+\epsilon(X_2)}R(X_1,X_2,Q)
\end{equation*}
and the second term in the r.h.s. of (\ref{Q-type}) is absent for
$n=0,1$. Besides the total symmetry in lower indices, the
concomitants (\ref{Q-type}) satisfy an infinite sequence of
algebraic relations of the form
\begin{equation}\label{QQ}
Q_n(Q,X_1,\ldots, X_{n-1})+ (\mbox{\textit{terms involving $Q_m$
with} $m<n$})=0\,.
\end{equation}
All these relations are quadratic in $Q$'s and obtained by repeated
differentiation of integrability condition (\ref{integrability}).

The special convenience of the concomitants (\ref{Q-type}) is that
they generate a tensor algebra, which is ``almost'' closed under the
action of the differential. We have
\begin{equation}\label{dQ}
\begin{array}{c}
\delta Q_0=0\,,\quad (\delta Q_1)(X)=Q_1(Q_1(X)) - \frac12
R(X,Q,Q)\,,\quad (\delta Q_2)(X_1,X_2)=0\,,\\[5mm]
(\delta Q_n)(X_1,\ldots,X_n)=O_S(X_1,\ldots, X_n)\,,\qquad n>2\,,
\end{array}
\end{equation}
where
$$
 O(X_1,\ldots,X_n)\equiv
-\sum_{k=1}^{n-2}\binom{n}{k}(-1)^{\sum_{i\leq k}\epsilon(X_i)}
Q_{k+1}({X_1,\ldots,X_k},Q_{n-k}(X_{k+1},\ldots,X_{n}))\,.
$$
The only ``bad'' concomitant is $Q_1$. Excluding $Q_1$, we get a
differential tensor subalgebra $\mathcal{Q}\subset \mathcal{A}$
generated by all $Q_n$'s with $n\neq 1$. The algebra
$(\mathcal{Q},\delta)$ enjoys an increasing filtration
$0\subset\mathcal{Q}_2\subset \mathcal{Q}_3\subset \cdots\subset
\mathcal{Q}_{\infty}=\mathcal{Q}$, where the $n$th differential
subalgebra $\mathcal{Q}_n$ is generated by the concomitants $Q_0,
Q_2,...,Q_n$. Each $\mathcal{Q}_n$ contains a differential
subalgebra $\mathcal{Q}'_n$ constituted by all the concomitants of
$\mathcal{Q}_n$ that do not involve tensor contractions of the
homological vector field $Q=Q_0$ with the other generators
$Q_2,...,Q_n$. With these definitions the main result of the paper
can be expressed by the relation
$$H_{\mathrm{st}}(\mathcal{A})=\mathcal{Q}'_2/(\mathcal{Q}'_2\cap\delta \mathcal{Q}'_3)\,.$$ In
particular, the concomitants  of the symmetric connection
$\{\nabla^n R\}$, being considered as a set of generators completing
(\ref{Q-type}) to a multiplicative basis in $\mathcal{A}$, do not
contribute to the stable cohomology at all. The proof will be given
in Section \ref{5}.

Let us now turn to Janyska-Markl's part of the multiplicative
basis in $\mathcal{A}$. It was shown in \cite{JM} that for any
symmetric connection  with curvature (\ref{curvature}) one can
associate a sequence of $(1,n)$-tensor fields $R_n$  of the form
\begin{equation}\label{R-type}
    R_n(X_1,\ldots,X_{n})=(\nabla^{n-3}R)(X_1,\ldots,X_{n-3})(X_{n-2},X_{n-1},X_n)+
    K_n\,,
\end{equation}
such that $K_n$ is made up of $R_k$ with $k<n$ and the tensors
(\ref{R-type}) enjoy the following symmetries:
\begin{description}
    \item[$(\mathrm{S}1)$] the antisymmetry in $X_{n-1}$ and $X_{n}$,
    \item[$(\mathrm{S}2)$] the cyclic symmetry in $X_{n-2}$, $X_{n-1}$ and
    $X_n$:
\begin{equation*}
    \begin{array}{l}
(-1)^{\epsilon(X_{n-1})(\epsilon(X_{n-2})+\epsilon(X_{n}))}R_n(X_1,\ldots,X_{n-3},X_{n-2}, X_{n-1},X_n)\\[3mm]
\displaystyle
+ (-1)^{\epsilon(X_{n-2})(\epsilon(X_{n})+\epsilon(X_{n-1}))}R_n(X_1,\ldots,X_{n-3},X_{n}, X_{n-2},X_{n-1}) \\[3mm]
\displaystyle +
(-1)^{\epsilon(X_{n})(\epsilon(X_{n-1})+\epsilon(X_{n-2}))}R_n(X_1,\ldots,X_{n-3},X_{n-1},
X_{n},X_{n-2})=0\,,
    \end{array}
\end{equation*}

\item[$(\mathrm{S}3)$] for $n\geq 4$, the cyclic symmetry in
$X_{n-3}$,
    $X_{n-1}$ and $X_{n}$:

\begin{equation*}
    \begin{array}{l}
(-1)^{\epsilon(X_{n-3})(\epsilon(X_{n-2})+\epsilon(X_{n}))}R_n(X_1,\ldots,X_{n-3},X_{n-2}, X_{n-1},X_n)\\[3mm]
\displaystyle
+ (-1)^{\epsilon(X_{n-1})(\epsilon(X_{n-2})+\epsilon(X_{n-3}))}R_n(X_1,\ldots,X_{n-1},X_{n-2}, X_{n},X_{n-3})\\[3mm]
\displaystyle +
(-1)^{\epsilon(X_{n})(\epsilon(X_{n-2})+\epsilon(X_{n-1}))}R_n(X_1,\ldots,X_{n},X_{n-2},
X_{n-3},X_{n-1})=0\,,
    \end{array}
\end{equation*}
 \item[$(\mathrm{S}4)$] for $n\geq 5$, the total symmetry in $X_1,\ldots,X_{n-3}$.
\end{description}
In actual fact, the permutations $(\mathrm{S}1)$-$(\mathrm{S}4)$
generate all symmetries of the tensors $R_n$'s. For $n=3, 4$, we
have $K_3=K_4=0$ and
\begin{equation*}
R_3=R\,,\qquad R_4=\nabla R\,.
\end{equation*}
In this case,  property $(\mathrm{S}1)$ follows from the standard
antisymmetry of the curvature tensor (\ref{curvature}), and the
properties $(\mathrm{S}2)$, $(\mathrm{S}3)$ reduce to the first
and second Bianchi identities. Therefore one can regards the
properties $(\mathrm{S}2)$, $(\mathrm{S}3)$ as the higher-order
generalization of the Bianchi identities for the curvature tensor.
The explicit calculation of the tensors $K_n$ appears to be quite
a difficult task even for $n=5$ and the complexity  grows rapidly
with $n$. Fortunately, the concrete form of $K_n$'s is absolutely
inessential for our subsequent considerations. What we will
actually use is two facts: (i) the tensors $R_n$ generate the
whole algebra of concomitants associated to the symmetric
connection and (ii) the generators $R_n$ obey no universal
algebraic relations  except for linear relations
$(\mathrm{S}1)$-$(\mathrm{S}4)$.

Applying the Lie derivative $\delta$ to the concomitants
(\ref{R-type}) yields
\begin{equation} \label{dR}
    \begin{array}{l}
    (\delta R_n)(X_1,\ldots ,X_n)=R_{n+1}(Q,X_1,\ldots,X_n)\\[3mm]
\qquad\displaystyle
+\sum_{k=1}^{n}(-1)^{\sum_{i<k}\epsilon(X_i)}R_n(\ldots
,Q_1(X_k),\ldots)-Q_1(R_n(X_1,\ldots ,X_n))+\cdots\,.
    \end{array}
\end{equation}
Here the  dots stand for terms that are at least bilinear in $R_k$'s
with $k<n$. Again, the explicit form of the omitted terms is
inessential for our subsequent calculations.

Taken together the concomitants $\{Q_n\}$ and $\{R_n\}$ constitute
a  multiplicative basis in the differential tensor algebra
$\mathcal{A}$.

\section{The graph complex}\label{GrCom}
The differential tensor algebra of concomitants $(\mathcal{A},
\delta)$ admits a very helpful visualization in terms of finite
graphs with legs. The relevant graphs are composed of black and
white vertices assigned  to the basis concomitants:
\begin{equation}
\begin{split}
\unitlength 1mm 
\begin{picture}(20,20)(-12,-3)
\put(-32,7){$Q_{n}(X_1,\ldots,X_n) \leftrightarrow$}
\put(0,0){\line(1,1){7.5}}\put(4,0){\line(1,2){3.75}}\put(16,0){\line(-1,1){7.5}}
\put(0,0){\vector(1,1){3.75}}\put(4,0){\vector(1,2){1.75}}\put(16,0){\vector(-1,1){3.75}}
\put(8,8){\circle*{1.5}} \put(8,8){\line(0,1){7.5}}
\put(8,8.75){\vector(0,1){3.75}} \put(7,0){{\scriptsize$\ldots$}}
\put(-1.5,-2){{\scriptsize${}_{1}$}}
\put(2.5,-2){{\scriptsize${}_{2}$}}
\put(15.5,-2){{\scriptsize${}_{n}$}}
\end{picture}
\qquad\qquad\qquad{
\unitlength 1mm 
\begin{picture}(20,20)(-20,-3)
\put(-25,7){$R_{n}(X_1,\ldots,X_n) \leftrightarrow$}
\put(7,0){\line(1,1){7.5}}\put(11,0){\line(1,2){3.7}}\put(23,0){\line(-1,1){7.5}}
\put(7,0){\vector(1,1){3.75}}\put(11,0){\vector(1,2){1.75}}\put(23,0){\vector(-1,1){3.75}}
\put(15,8){\circle{1.5}} \put(15,8.75){\line(0,1){7.5}}
\put(15,8.75){\vector(0,1){3.75}}
\put(14,0){{\scriptsize$\ldots$}}
\put(5.5,-2){{\scriptsize${}_{1}$}}
\put(9.5,-2){{\scriptsize${}_{2}$}}
\put(22.5,-2){{\scriptsize${}_{n}$}}
\end{picture}
} \label{basis vertices}
\end{split}
\end{equation}

As is seen the edges incident  to the vertices are
\textit{directed} and each vertex is allowed to have the only
outgoing and several incoming edges. The edges represent,
respectively, the contravariant and covariant tensor indices. The
planar embedding of the vertex graphs  induces the natural
left-to-right \textit{ordering} of the incoming edges. This
ordering, however, is crucial only for the white vertices, since
the basis concomitants corresponding to black vertices are fully
symmetric in covariant indices. Gluing together incoming and
outgoing edges of the vertices above one can produce more general
graphs, which will describe contraction schemes for the tensor
indices of composite concomitants. The graphs we define in such a
way need not be connected and loops are allowed. The remaining
uncontracted tensor indices correspond to \textit{legs}, i.e.,
edges bounded by a vertex from one side and having ``free end'' on
the other. To indicate the order of the indices, the incoming and
outgoing legs are numbered, separately, by consecutive integers
and this defines a \textit{decoration} of a graph.  Permutations
of tensor indices result then in permutations of labels on the
legs. Since the algebra of functions on a supermanifold is
supercommutative rather than commutative in the usual sense,  we
should take into account the sign factors arising upon permutation
of different tensor components in composite concomitants. This
leads us to the concept of \textit{orientation}. By definition, an
orientation on a graph $\Gamma$ is determined by ordering the
vertices of $\Gamma$. Two orientations are the same if they
obtained from one another by a permutation of numbers of black and white
vertices with even number of black-vertex swaps. Finally, to allow
for the integrability condition $Q^2=0$ and its differential
consequences (\ref{QQ}) we exclude from consideration the graphs
that have at least one edge joining a univalent black vertex with
another black vertex. The remaining graphs will be referred to as
$\mathcal{A}$-\textit{graphs}.

\begin{figure}[ht!]
\begin{center}
\subfigure[Antisymmetry in the last two incoming edges]{
\unitlength 1mm 
\begin{picture}(87,20)(-13,-3)
\put(6.75,0){\line(1,1){7.65}}
\put(15,0){\line(0,1){7.25}}\put(19.2,0){\line(-1,2){3.67}}\put(23.5,0){\line(-1,1){7.74}}
\put(6.75,0){\vector(1,1){3.5}}
\put(15,0){\vector(0,1){3.5}}\put(19.2,0){\vector(-1,2){1.75}}\put(23.5,0){\vector(-1,1){3.5}}
\put(15,8){\circle{1.5}} \put(15,8.75){\line(0,1){5.5}}
\put(15,8.75){\vector(0,1){3.75}} \put(9,0){{\scriptsize$\ldots$}}
\put(17.2,-2){{\scriptsize${}_{n\!-\!1}$}}
\put(23.8,-2){{\scriptsize${}_{n}$}} \put(27,7){$+$}
\put(33.75,0){\line(1,1){7.65}}
\put(42,0){\line(0,1){7.25}}\put(46.2,0){\line(-1,2){3.67}}\put(50.5,0){\line(-1,1){7.74}}
\put(33.75,0){\vector(1,1){3.5}}\put(42,0){\vector(0,1){3.5}}\put(46.2,0){\vector(-1,2){1.75}}\put(50.5,0){\vector(-1,1){3.5}}
\put(42,8){\circle{1.5}} \put(42,8.75){\line(0,1){5.5}}
\put(42,8.75){\vector(0,1){3.75}}
\put(36,0){{\scriptsize$\ldots$}}
\put(49,-2){{\scriptsize${}_{n\!-\!1}$}}
\put(45.5,-2){{\scriptsize${}_{n}$}} \put(54.25,7){$=0$}
\end{picture}
}

\subfigure[The first Bianchi identity]{
\unitlength 1mm 
\begin{picture}(75,20)(12,-3)
\put(6.75,0){\line(1,1){7.65}}\put(15,0){\line(0,1){7.25}}\put(19.2,0){\line(-1,2){3.67}}\put(23.5,0){\line(-1,1){7.74}}
\put(6.75,0){\vector(1,1){3.5}}\put(15,0){\vector(0,1){3.5}}\put(19.2,0){\vector(-1,2){1.75}}\put(23.5,0){\vector(-1,1){3.5}}
\put(15,8){\circle{1.5}} \put(15,8.75){\line(0,1){5.5}}
\put(15,8.75){\vector(0,1){3.75}} \put(9,0){{\scriptsize$\ldots$}}
\put(13,-2){{\scriptsize${}_{n\!-\!2}$}}
\put(18.2,-2){{\scriptsize${}_{n\!-\!1}$}}
\put(23.8,-2){{\scriptsize${}_{n}$}} \put(27,7){$+$}
\put(33.75,0){\line(1,1){7.65}}\put(42,0){\line(0,1){7.25}}\put(46.2,0){\line(-1,2){3.67}}\put(50.5,0){\line(-1,1){7.74}}
\put(33.75,0){\vector(1,1){3.5}}\put(42,0){\vector(0,1){3.5}}\put(46.2,0){\vector(-1,2){1.75}}\put(50.5,0){\vector(-1,1){3.5}}
\put(42,8){\circle{1.5}} \put(42,8.75){\line(0,1){5.5}}
\put(42,8.75){\vector(0,1){3.75}}
\put(36,0){{\scriptsize$\ldots$}}
\put(40,-2){{\scriptsize${}_{n\!-\!1}$}}
\put(49.5,-2){{\scriptsize${}_{n\!-\!2}$}}
\put(45.5,-2){{\scriptsize${}_{n}$}} \put(54.25,7){$+$}
\put(60.75,0){\line(1,1){7.65}}\put(69,0){\line(0,1){7.25}}\put(73.2,0){\line(-1,2){3.67}}\put(77.5,0){\line(-1,1){7.74}}
\put(60.75,0){\vector(1,1){3.5}}\put(69,0){\vector(0,1){3.5}}\put(73.2,0){\vector(-1,2){1.75}}\put(77.5,0){\vector(-1,1){3.5}}
\put(69,8){\circle{1.5}} \put(69,8.75){\line(0,1){5.5}}
\put(69,8.75){\vector(0,1){3.75}}
\put(63,0){{\scriptsize$\ldots$}}
\put(77,-2){{\scriptsize${}_{n\!-\!1}$}}
\put(72.2,-2){{\scriptsize${}_{n\!-\!2}$}}
\put(68.5,-2){{\scriptsize${}_{n}$}} \put(81.25,7){$=0$}
\end{picture}
}

\subfigure[The second Bianchi identity]{
\unitlength 1mm 
\begin{picture}(87,20)(5,-3)
\put(3.75,0){\line(4,3){10,5}}\put(11,0){\line(1,2){3.65}}\put(15,0){\line(0,1){7.25}}\put(19,0){\line(-1,2){3.65}}\put(23.5,0){\line(-1,1){7.74}}
\put(3.75,0){\vector(4,3){4.7}}\put(11,0){\vector(1,2){1.75}}\put(15,0){\vector(0,1){3.5}}\put(19,0){\vector(-1,2){1.75}}\put(23.5,0){\vector(-1,1){3.5}}
\put(15,8){\circle{1.5}} \put(15,8.75){\line(0,1){5.5}}
\put(15,8.75){\vector(0,1){3.75}} \put(6,0){{\scriptsize$\ldots$}}
\put(8,-2){{\scriptsize${}_{n\!-\!3}$}}
\put(13.5,-2){{\scriptsize${}_{n\!-\!2}$}}
\put(19,-2){{\scriptsize${}_{n\!-\!1}$}}
\put(24,-2){{\scriptsize${}_{n}$}} \put(27.25,8){$+$}
\put(33.75,0){\line(4,3){10,5}}\put(41,0){\line(1,2){3.65}}\put(45,0){\line(0,1){7.25}}\put(49,0){\line(-1,2){3.65}}\put(53.5,0){\line(-1,1){7.74}}
\put(33.75,0){\vector(4,3){4.7}}\put(41,0){\vector(1,2){1.75}}\put(45,0){\vector(0,1){3.5}}\put(49,0){\vector(-1,2){1.75}}\put(53.5,0){\vector(-1,1){3.5}}
\put(45,8){\circle{1.5}} \put(45,8.75){\line(0,1){5.5}}
\put(45,8.75){\vector(0,1){3.75}}
\put(36,0){{\scriptsize$\ldots$}}
\put(38,-2){{\scriptsize${}_{n\!-\!1}$}}
\put(49,-2){{\scriptsize${}_{n}$}}
\put(52.5,-2){{\scriptsize${}_{n\!-\!3}$}}
\put(43.5,-2){{\scriptsize${}_{n\!-\!2}$}} \put(57.25,8){$+$}
\put(63.75,0){\line(4,3){10,5}}\put(71,0){\line(1,2){3.65}}\put(75,0){\line(0,1){7.25}}\put(79,0){\line(-1,2){3.65}}\put(83.5,0){\line(-1,1){7.74}}
\put(63.75,0){\vector(4,3){4.7}}\put(71,0){\vector(1,2){1.75}}\put(75,0){\vector(0,1){3.5}}\put(79,0){\vector(-1,2){1.75}}\put(83.5,0){\vector(-1,1){3.5}}
\put(75,8){\circle{1.5}} \put(75,8.75){\line(0,1){5.5}}
\put(75,8.75){\vector(0,1){3.75}}
\put(66,0){{\scriptsize$\ldots$}}
\put(78,-2){{\scriptsize${}_{n\!-\!3}$}}
\put(83,-2){{\scriptsize${}_{n\!-\!1}$}}
\put(73,-2){{\scriptsize${}_{n\!-\!2}$}}
\put(70,-2){{\scriptsize${}_{n}$}} \put(87.25,8){$=0$}
\end{picture}
}

\subfigure[Total symmetry in the first $n-3$
    incoming edges]{
\unitlength 1mm 
\begin{picture}(87,20)(-13,-3)
\put(3.75,0){\line(4,3){10,5}}\put(11,0){\line(1,2){3.65}}\put(15,0){\line(0,1){7.25}}\put(23.5,0){\line(-1,1){7.74}}
\put(3.75,0){\vector(4,3){4.7}}\put(11,0){\vector(1,2){1.75}}\put(15,0){\vector(0,1){3.5}}\put(23.5,0){\vector(-1,1){3.5}}
\put(15,8){\circle{1.5}} \put(15,8.75){\line(0,1){5.5}}
\put(15,8.75){\vector(0,1){3.75}} \put(6,0){{\scriptsize$\ldots$}}
\put(17,0){{\scriptsize$\ldots$}} \put(10,-2){{\scriptsize${}_{i}$}}
\put(13.5,-2){{\scriptsize${}_{i\!+\!1}$}}
\put(24,-2){{\scriptsize${}_{n}$}} \put(27.25,8){$-$}
\put(33.75,0){\line(4,3){10,5}}\put(41,0){\line(1,2){3.65}}\put(45,0){\line(0,1){7.25}}\put(53.5,0){\line(-1,1){7.74}}
\put(33.75,0){\vector(4,3){4.7}}\put(41,0){\vector(1,2){1.75}}\put(45,0){\vector(0,1){3.5}}\put(53.5,0){\vector(-1,1){3.5}}
\put(45,8){\circle{1.5}} \put(45,8.75){\line(0,1){5.5}}
\put(45,8.75){\vector(0,1){3.75}} \put(36,0){{\scriptsize$\ldots$}}
\put(47,0){{\scriptsize$\ldots$}}
\put(38.5,-2){{\scriptsize${}_{i\!+\!1}$}}
\put(54,-2){{\scriptsize${}_{n}$}}
\put(45.5,-2){{\scriptsize${}_{i}$}} \put(57.25,8){$=0$}
\end{picture}
}

\caption{Symmetries of white vertices}\label{v-symmetries}
\end{center}
\end{figure}

Now we are ready to define a \textit{graph complex} $(\mathcal{G},
\partial)$ associated to the differential tensor algebra of concomitants
$\mathcal{A}$. The group of $k$-cochains $\mathcal{G}^k$ is, by
definition, a quotient of the real vector space spanned by
$\mathcal{A}$-graphs with $k$ black vertices and arbitrary number
of white vertices:
\begin{equation*}
    \mathcal{G}^k=\mathbb{R}[\mathcal{A}\mbox{\textit{-graphs with k
    black vertices}}]/\mbox{\textit{relations}}\,,
\end{equation*}
where the relations are of two sorts:
\begin{enumerate}
    \item (Orientation) $(\Gamma, -or)=-(\Gamma, or)$.
    \item (Vertex symmetries) The order of edges coming to the black
    vertices is considered to be inessential (total symmetry), while the symmetry of
    white vertices is described by the equivalence relations shown
    in Fig. \ref{v-symmetries}.
\end{enumerate}
Thus, $\mathcal{G}^k$ is spanned by $\mathcal{A}$-graphs with $k$
black vertices and arbitrary number of white vertices. We set
\begin{equation*}
    \mathcal{G}=\bigoplus_{k\geq 0} \mathcal{G}^k\,.
\end{equation*}

\begin{figure}[ht!]
\unitlength 1mm 
\begin{picture}(60,25)(-15,-5)
\put(-9,6){$\partial\left(\rule{0mm}{10mm}\right.$}
\put(17,6){$\left.\rule{0mm}{10mm}\right) = \qquad $
{\large$\sum$}} \put(28,1.75){\scriptsize$I'\sqcup I''=I$}
\put(25,-2.5){\scriptsize$|I'|>0, |I''|>1$}
\put(0,0){\line(1,1){7.5}}\put(4,0){\line(1,2){3.75}}\put(16,0){\line(-1,1){7.5}}
\put(0,0){\vector(1,1){3.75}}\put(4,0){\vector(1,2){1.75}}\put(16,0){\vector(-1,1){3.75}}
\put(8,8){\circle*{1.5}} \put(8,8){\line(0,1){7.5}}
\put(8,8){\vector(0,1){3.75}} \put(7,0){{\scriptsize$\ldots$}}
\put(9,8){{\scriptsize${}_1$}}
\put(-1,-0.5){$\underbrace{\rule{18mm}{0mm}}_{I}$}
\put(47,5){\line(1,1){7.5}}\put(51,5){\line(1,2){3.75}}\put(63,5){\line(-1,1){7.5}}
\put(47,5){\vector(1,1){3.75}}\put(51,5){\vector(1,2){1.75}}\put(63,5){\vector(-1,1){3.75}}
\put(55,13){\circle*{1.5}} \put(55,13){\line(0,1){7.5}}
\put(55,13){\vector(0,1){3.75}} \put(54,5){{\scriptsize$\ldots$}}
\put(56,13){{\scriptsize${}_1$}}
\put(46,4.5){$\underbrace{\rule{12mm}{0mm}}_{I'}$}
\put(55,-3){\line(1,1){7.5}}\put(59,-3){\line(1,2){3.75}}\put(71,-3){\line(-1,1){7.5}}
\put(55,-3){\vector(1,1){3.75}}\put(59,-3){\vector(1,2){1.75}}\put(71,-3){\vector(-1,1){3.75}}
\put(63,5){\circle*{1.5}} \put(62,-3){{\scriptsize$\ldots$}}
\put(64,5){{\scriptsize${}_2$}}
\put(54,-3.5){$\underbrace{\rule{18mm}{0mm}}_{I''}$}
\put(73,6){$,$}
\end{picture}
\qquad\qquad{
\unitlength 1mm 
\begin{picture}(60,25)(-27,-5)
\put(-7,6){$\partial\left(\rule{0mm}{8mm}\right.$}
\put(8,6){$\left.\rule{0mm}{8mm}\right) = 0, $}
\put(1,0){\line(1,2){2.7}}\put(7,0){\line(-1,2){2.7}}
\put(1,0){\vector(1,2){1.75}}\put(7,0){\vector(-1,2){1.75}}
\put(4,6){\circle*{1.5}} \put(4,6){\line(0,1){6.5}}
\put(4,6){\vector(0,1){3.75}}
\end{picture}
}
\begin{picture}(60,20)(-35,0)
\put(-36,8){\line(1,0){8}}\put(-43,8){\line(1,0){8}}
\put(-36,8){\vector(1,0){6}}\put(-43,8){\vector(1,0){4}}
\put(-36,8){\circle*{1.5}} \put(-26,7){$) =$}
\put(-48,7){$\partial\; ($} \put(-36.5,10){{\scriptsize${}_{1}$}}
\put(-10,8){\line(1,0){8}}\put(-2,8){\line(1,0){8}}\put(-18,8){\line(1,0){8}}
\put(-10,8){\vector(1,0){4.5}}\put(-2,8){\vector(1,0){5}}\put(-18,8){\vector(1,0){4}}
\put(-10,8){\circle*{1.5}} \put(-2,8){\circle*{1.5}}
\put(-2.5,10){{\scriptsize${}_{2}$}}
\put(-10.5,10){{\scriptsize${}_{1}$}} \put(8,7){$-\frac12$}
\put(16,8){\line(1,0){7,25}}\put(24.75,8){\line(1,0){8}}\put(21,2){\line(1,2){2.7}}\put(27,2){\line(-1,2){2.7}}
\put(16,8){\vector(1,0){4}}\put(24.75,8){\vector(1,0){5}}\put(21,2){\vector(1,2){1.7}}\put(27,2){\vector(-1,2){1.7}}
\put(24,8){\circle{1.5}}
\put(21,2){\circle*{1.5}}\put(27,2){\circle*{1.5}}
\put(19,0){{\scriptsize${}_{2}$}}
\put(28,0){{\scriptsize${}_{1}$}}
\put(23.5,10){{\scriptsize${}_{3}$}} \put(33.5,7){$,$}
\end{picture}
\qquad\qquad{
\unitlength 1mm 
\begin{picture}(20,10)(-10,-5)
\put(-7.6,2){$\partial\left(\rule{0mm}{5mm} \quad \right)=0,$}
\put(0,0){\circle*{1.5}} \put(0,0){\line(0,1){7}}
\put(0,0){\vector(0,1){4.5}}
\end{picture}
}
\begin{picture}(60,30)(5,-5)
\put(-27,6){$\partial\left(\rule{0mm}{10mm}\right.$}
\put(-1,6){$\left.\rule{0mm}{10mm}\right) = \qquad $}
\put(-18,0){\line(1,1){7.5}}\put(-14,0){\line(1,2){3.7}}\put(-2,0){\line(-1,1){7.5}}
\put(-18,0){\vector(1,1){3.75}}\put(-14,0){\vector(1,2){1.75}}\put(-2,0){\vector(-1,1){3.75}}
\put(-10,8){\circle{1.5}} \put(-10,8.75){\line(0,1){7.5}}
\put(-10,8.75){\vector(0,1){3.75}}
\put(-11,0){{\scriptsize$\ldots$}}
\put(-19.25,-2){{\scriptsize${}_{1}$}}
\put(-15.25,-2){{\scriptsize${}_{2}$}}
\put(-2.75,-2){{\scriptsize${}_{n}$}}
\put(-9,8){{\scriptsize${}_1$}}
\put(14,0){\line(1,1){7.5}}\put(18,0){\line(1,2){3.7}}\put(30,0){\line(-1,1){7.5}}
\put(14,0){\vector(1,1){3.75}}\put(18,0){\vector(1,2){1.75}}\put(30,0){\vector(-1,1){3.75}}
\put(14,0){\circle*{1.5}} \put(22,8){\circle{1.5}}
\put(22,8.75){\line(0,1){7.5}} \put(22,8.75){\vector(0,1){3.75}}
\put(21,0){{\scriptsize$\ldots$}}
\put(16.75,-2){{\scriptsize${}_{1}$}}
\put(29,-2){{\scriptsize${}_{n}$}} \put(23,8){{\scriptsize${}_2$}}
\put(11.5,0.5){{\scriptsize${}_1$}}
\put(33,6){$-$}
\put(39,0){\line(1,1){7.5}}\put(43,0){\line(1,2){3.7}}\put(55,0){\line(-1,1){7.5}}
\put(39,0){\vector(1,1){3.75}}\put(43,0){\vector(1,2){1.75}}\put(55,0){\vector(-1,1){3.75}}
\put(47,8){\circle{1.5}} \put(47,8.75){\line(0,1){5.5}}
\put(47,8.75){\vector(0,1){3.75}} \put(47,14.25){\line(0,1){5.5}}
\put(47,14.25){\vector(0,1){3.75}} \put(47,14.25){\circle*{1.5}}
\put(46,0){{\scriptsize$\ldots$}}
\put(37.75,-2){{\scriptsize${}_{1}$}}
\put(41.75,-2){{\scriptsize${}_{2}$}}
\put(54,-2){{\scriptsize${}_{n}$}} \put(48,8){{\scriptsize${}_1$}}
\put(48,14.25){{\scriptsize${}_2$}}
\put(64,1.75){\scriptsize$k$} \put(58,6){$+ \, $\large$\sum$}
\put(69,0){\line(1,1){8.5}}\put(78,3){\line(0,1){5.25}}\put(87,0){\line(-1,1){8.5}}
\put(69,0){\vector(1,1){3.75}}\put(78,3){\vector(0,1){1}}\put(87,0){\vector(-1,1){3.75}}
\put(78,9){\circle{1.5}} \put(78,9.75){\line(0,1){6.5}}
\put(78,9.75){\vector(0,1){3.75}}
\put(80.5,0){{\scriptsize$\ldots$}}
\put(71.2,0){{\scriptsize$\ldots$}} \put(78,-3){\line(0,1){5.5}}
\put(78,-3){\vector(0,1){3.75}} \put(78,3){\circle*{1.5}}
\put(78.25,-4){{\scriptsize$_k$}}
\put(67.75,-2){{\scriptsize${}_{1}$}}
\put(86,-2){{\scriptsize${}_{n}$}} \put(79,9){{\scriptsize${}_2$}}
\put(79,3.5){{\scriptsize${}_1$}} \put(90,6){$+ \cdots$}
\end{picture}
\caption{ Action of the coboundary operator on the black and white
vertices. In the first equality $|I|\geq 3$. The omitted terms in
the r.h.s. of the last equality are given by graphs with two or
more white vertices.}\label{delta action}
\end{figure}

The coboundary  operator $\partial: \mathcal{G}^k\rightarrow
\mathcal{G}^{k+1}$ is defined to be the \textit{graph}ical
representation of the differential $\delta$ given by Rels.
(\ref{dQ}), (\ref{dR}). Namely, let $\Gamma$ be an
$\mathcal{A}$-graph and let $v$ be a vertex of $\Gamma$. Define an
element $\Gamma_v \in \mathcal{G}$ as follows. If the valency of $v$
is 1 or 3, then $\Gamma_v$ is zero. In the opposite case, $\Gamma_v$
is obtained from $\Gamma$ by replacing the vertex $v$ with a linear
combination of graphs $\partial v$ as shown in Fig. \ref{delta
action}. The orientation on $\Gamma_v$ is determined by the
following rule: choose a representative of the orientation on
$\Gamma$ so that $v$ is the first vertex, then the new vertices of
$\partial v$ are numbered as in Fig. \ref{delta action} and the
numbers of all other vertices increase by 1. With this orientation
convention the map
\begin{equation*}
    \partial \Gamma=\sum_{v\in \Gamma}\Gamma_v
\end{equation*}
is a coboundary operator. Denote the corresponding cohomology
groups by $H(\mathcal{G})=\bigoplus H^k(\mathcal{G})$.

Clearly, the above relation between $\mathcal{A}$-graphs and
concomitants gives rise to an epimorphism $\phi:
\mathcal{G}\rightarrow \mathcal{A}$ of vector spaces. Moreover, the
way we have defined the operator $\partial$ shows  that $\phi$ is a
cochain map. The following definition is central for our
consideration.
\begin{definition}
The stable characteristic classes of $Q$-manifolds are the
$\delta$-cohomology classes belonging to the image of the
homomorphism
\begin{equation}\label{hom}
    H(\phi): H(\mathcal{G})\rightarrow
    H(\mathcal{A})\,.
\end{equation}
\end{definition}

\begin{rem}\label{rem}
One can view each element $g \in \mathcal{G}$ as defining a
(nonlinear) differential operator $\widehat{g}$ on homological
vector fields and symmetric connections with values in tensor
fields. The assignment $g\mapsto \widehat{g}$ defines a
homomorphism $\widehat{\phi}: \mathcal{G}\rightarrow
\frak{Nat}_{Q,\nabla}$ from the space of graphs to the space of
\textit{natural differential operators} \cite{KMS}, \cite{Mar2}
that act on a homological vector field $Q$ and a symmetric
connection $\nabla$. The homomorphism $\phi$ decomposes as
$\phi=\mathrm{ev}_{_{Q,\nabla}}\circ \widehat{{\phi}}$, where
$\mathrm{ev}_{_{Q,\nabla}}$ is the evaluation map. The map
$\widehat{\phi}$ is known to be surjective and so is the map
$\phi$. The kernel of $\mathrm{ev}_{_{Q,\nabla}}$ depends on a
particular form of $Q$ and $\nabla$, while the kernel of
$\widehat{\phi}$ is completely determined by the dimension of the
underlying supermanifold $M$. Namely, adapting the Main Theorem of
Invariant Theory \cite{KMS}, \cite{W} to our situation, one can
argue that the map $\widehat{\phi}$ is an isomorphism in stable
range of dimensions. More precisely, if
$\mathcal{G}^{(m,n)}\subset \mathcal{G}$ is the subspace of graphs
with $m$ vertices and $n$ legs, then the restriction of
$\widehat{\phi}$ to $\mathcal{G}^{(m,n)}$ is a bijection provided
that $\min(s,t)\gg \min(m,n)$, where $(s,t)=\dim M$. Some exact
evolutions of the low bound of stable dimensions can be found in
\cite{Mar2}, \cite{FF}, \cite{Mar1}. Beyond the stable range the
map $H(\phi)$ is neither surjective nor injective. In the
Introduction, the space of stable characteristic classes
$\mathrm{Im}H(\phi)$ was denoted by
$H_{\mathrm{st}}(\mathcal{A})$.
\end{rem}

\begin{rem}
The graph complex $\mathcal{G}$ underlying the definition of
stable characteristic classes is not the only possible or most
natural choice. In Sec. 4.2 we will also consider a quotient
complex $\mathcal{G}/\mathcal{P}$ associated to a certain
subcomplex $\mathcal{P}\subset \mathcal{G}$. The homomorphism
$\phi$ passes through the quotient for an appropriate choice of
symmetric connection, giving rise to an additional series of
characteristic classes ($A$-series). It is the $A$-series of
invariants of $Q$-manifolds that was originally discovered in
\cite{LS} and called the principal series of characteristic
classes.
\end{rem}

The rest of the paper is devoted to computation of the graph
cohomology.

Since $\partial$ does not affect the legs of $\mathcal{A}$-graphs,
the graph complex is decomposed into the direct sum of
subcomplexes
\begin{equation*}
\mathcal{G}=\bigoplus_{n,m} {\mathcal{G}}_{n,m}\,,
\end{equation*}
where the subscripts $n$ and $m$ refer to the number of incoming
and outgoing legs of graphs. Another  remarkable property of the
coboundary operator $\partial$ is that it neither permutes the
connected components of an $\mathcal{A}$-graph nor changes their
number.  This leads to the further decomposition of
$\mathcal{G}_{n,m}$ into the direct sum of subcomplexes
$\mathcal{G}_{n,m}^{I_1,...,I_k;J_1,...,J_k}$, where
$\{I_1,...,I_k\}$ and $\{J_1,...,J_k\}$ are partitions of the sets
$\{1,...,n\}$ and $\{1,...,m\}$. The complex
$\mathcal{G}_{n,m}^{I_1,...,I_k;J_1,...,J_k}$ is generated by
graphs with $k$ connected components such that the incoming and
outgoing legs of the $l$th component are labelled by the element
of $I_l$ and $J_l$, respectively. Notice that the sets
$\{I_1,...,I_k\}$ and $\{J_1,...,J_k\}$ are defined up to
simultaneous permutations of $I_q$ with $I_p$ and $J_q$ with
$J_p$, and some of the sets $I_1,...,I_k$, $J_1,...,J_k$ may be
empty. Let $\bar{\mathcal{G}}=\bigoplus\bar{\mathcal{G}}_{n,m}$
denote the subcomplex of \textit{connected} graphs. It is clear
that
\begin{equation*}
\mathcal{G}_{n,m}^{I_1,...,I_k;J_1,...,J_k}\simeq
\bigotimes_{l=1}^k \bar{\mathcal{G}}_{|I_l|,|J_l|}\,,
\end{equation*}
and by the K\"unneth formula the computation of the graph
cohomology boils down  to the computation of the groups
$H^k(\bar{\mathcal{G}}_{n,m})$.

The characteristic classes that belong to the image of the
connected graph cohomology under the map (\ref{hom}) will be
called \textit{primitive}. A linear basis in the space of all
characteristic classes is made up of the primitive characteristic
classes by means of tensor products and permutations of tensor
indices.

\section{The cohomology of the connected graph complex}\label{5}

We start with the  observation that the complex of connected
graphs $\bar{\mathcal{G}}$ splits into a direct sum of four
subcomplexes,
\begin{equation*}
\bar{\mathcal{G}}=\bar{\mathcal{G}}^{(1)}\oplus
\bar{\mathcal{G}}^{(2)}\oplus \bar{\mathcal{G}}^{(3)}\oplus
\bar{\mathcal{G}}^{(4)}\,.
\end{equation*}
Here $\bar{\mathcal{G}}^{(1)}$ is the one dimensional complex
spanned by the graph $\bullet \!\!\!\!\rightarrow$. To describe
the complex $\bar{\mathcal{G}}^{(2)}$ it is convenient to
introduce a special notation\footnote{This cannot cause a
confusion since the basis white vertices (\ref{basis vertices})
have valency $\geq 4$.} for a graph entering the third relation in
Fig. \ref{delta action}, namely,
\begin{equation}
\begin{split}
\unitlength 1mm 
\begin{picture}(60,13)(-30,1)
\put(-22,7){$ \frac12 $}
\put(2,7){$ \equiv \qquad $}
\put(-18,8){\line(1,0){7.25}}\put(-9.25,8){\line(1,0){8}}\put(-13,2){\line(1,2){2.7}}\put(-7,2){\line(-1,2){2.7}}
\put(-18,8){\vector(1,0){4}}\put(-9.25,8){\vector(1,0){5}}\put(-13,2){\vector(1,2){1.7}}\put(-7,2){\vector(-1,2){1.7}}
\put(-10,8){\circle{1.5}}
\put(-13,2){\circle*{1.5}}\put(-7,2){\circle*{1.5}}
\put(-15,0){{\scriptsize${}_{2}$}}
\put(-6,0){{\scriptsize${}_{1}$}}
\put(8,8){\line(1,0){7.25}}\put(16.75,8){\line(1,0){8}}
\put(8,8){\vector(1,0){4}}\put(16.75,8){\vector(1,0){5}}
\put(16,8){\circle{1.5}}\put(25.5,7){$.$}
\end{picture}\label{bivalent}
\end{split}
\end{equation}
 The action of the differential on
the bivalent white vertex reads
\begin{equation*}
\unitlength 1mm 
\begin{picture}(60,10)(-10,5)
\put(-23,7){$\partial\left(\rule{0mm}{3mm}\right.$}
\put(-1,7){$\left.\rule{0mm}{3mm}\right) = \qquad $}
\put(-18,8){\line(1,0){7.25}}\put(-9.25,8){\line(1,0){8}}
\put(-18,8){\vector(1,0){4}}\put(-9.25,8){\vector(1,0){5}}
\put(-10,8){\circle{1.5}} \put(-10.5,10){{\scriptsize${}_{1}$}}
\put(8,8){\line(1,0){7.25}} \put(8,8){\vector(1,0){4}}
\put(16,8){\circle*{1.5}}
\put(16,8){\line(1,0){5.25}}\put(22.75,8){\line(1,0){8}}
\put(16,8){\vector(1,0){4}}\put(22.75,8){\vector(1,0){5}}
\put(22,8){\circle{1.5}} \put(15.5,10){{\scriptsize${}_{1}$}}
\put(21.5,10){{\scriptsize${}_{2}$}}
\put(33,7){$-$}
\put(39,8){\line(1,0){7.25}}\put(47.75,8){\line(1,0){8}}
\put(39,8){\vector(1,0){4}}\put(47.75,8){\vector(1,0){3.5}}
\put(47,8){\circle{1.5}} \put(52.75,8){\line(1,0){7.25}}
\put(52.75,8){\vector(1,0){4}} \put(52.75,8){\circle*{1.5}}
\put(52,10){{\scriptsize${}_{2}$}}
\put(46.5,10){{\scriptsize${}_{1}$}}\put(60.85,7){$.$}
\end{picture}
\end{equation*}
As is seen the vector space of graphs composed of black and white
bivalent vertices is invariant under the action of $\partial$ and
we identify this space with $\bar{\mathcal{G}}^{(2)}$. The graphs
generating $\bar{\mathcal{G}}^{(3)}$ contain only black vertices
of valency $\geq 3$. Finally, the linear span of all other
$\mathcal{A}$-graphs  defines the complex
$\bar{\mathcal{G}}^{(4)}$.

\subsection{The cohomology of $\bar{\mathcal{G}}^{(1)}$} It is clear that
 $H(\bar{\mathcal{G}}^{(1)})\cong\bar{\mathcal{G}}^{(1)}\cong \mathbb{R}$.

\subsection{The cohomology of $\bar{\mathcal{G}}^{(2)}$} It was shown
in \cite{LMS2} that the cohomology of $\bar{\mathcal{G}}^{(2)}$ is
trivial. This a little bit disappointing fact means that we have
no nontrivial characteristic classes associated to the graph
complex $\bar{\mathcal{G}}^{(2)}$. The situation, however, is not
so hopeless as might appear. Observe that the complex
$\bar{\mathcal{G}}^{(2)}$ contains the subcomplex $\mathcal{P}$
spanned by the cocycles

\begin{equation*}
\begin{split}
\unitlength 1.5mm
\begin{picture}(60,10)(10,0)
\thinlines \curvedashes[1mm]{0,7,2} \put(4,4.5)
{$\Pi_{2n-1}=-\binom{4n-3}{2n-1}$}
\put(37,4.5){$,\qquad\partial\Pi_{2n-1}=0\,,\qquad \forall n\in
\mathbb{N}\,.$}
 \put(32.5,0.7){\vector(2,1){0.5}}
\put(29.8,0){\circle{1.15}} \put(29.8,-2){{\scriptsize$2$}}
\put(26.5,1.5){\vector(1,-1){0.5}} \put(25.2,3.4){\circle{1.15}}
\put(23.7,1.5){{\scriptsize$1$}}
\put(25.6,7.2){\vector(-1,-2){0.5}} \put(26.7,8.8){\circle{1.15}}
\put(23.5,10.5){{\scriptsize$2n\!-\!1$}}
\put(29.8,9.95){\vector(-1,0){0.5}} \put(32.4,9.4){\circle{1.15}}
\put(33,10.5){{\scriptsize$2n\!-\!2$}}
\put(34.7,6.7){\vector(-1,2){0.5}} \put(20,5){
\scaleput(10,0){\arc(5,0){320}} \curvedashes[1mm]{0,0.3,0.9}
\scaleput(10,0){\arc(5,0){-40}} }
\end{picture}
\end{split}
\end{equation*}
Let $\pi:\bar{\mathcal{G}}^{(2)}\rightarrow
\bar{\mathcal{G}}^{(2)}/\mathcal{P}$ denote the canonical
projection.  The acyclicity of $\bar{\mathcal{G}}^{(2)}$ implies
two facts: (i) there exists a cochain $\Psi_{2n-1}\in
\bar{\mathcal{G}}^{(2)}$ such that  $\Pi_{2n-1}=\partial
\Psi_{2n-1} $ and (ii)  $\Psi'_{2n-1}=\pi(\Psi_{2n-1})$ is a
nontrivial cocycle of $\bar{\mathcal{G}}^{(2)}/\mathcal{P}$. The
relative cocycles $\Psi_{2n-1}$ admit an explicit description
\cite{LS}. Namely, each $\Psi_{2n-1}$ is represented by a linear
combination of cyclic graphs
\begin{equation}\label{A}
\Psi_{2n-1}=\psi_1+\psi_2+\ldots+\psi_{2n-1}
\end{equation}
composed of the black and white bivalent vertices, where the graph

\begin{equation*}
\begin{split}
\unitlength 1.5mm
\begin{picture}(55,10)
\thinlines \curvedashes[1mm]{0,7,2} \put(16,4.5){$\psi_1=$}
\put(32.5,0.7){\vector(2,1){0.5}} \put(29.8,0){\circle*{1.15}}
\put(29.8,-2){{\scriptsize$2$}} \put(26.5,1.5){\vector(1,-1){0.5}}
\put(25.2,3.4){\circle*{1.15}} \put(23.7,1.5){{\scriptsize$1$}}%
\put(25.6,7.2){\vector(-1,-2){0.5}} \put(26.7,8.8){\circle*{1.15}}
\put(23.5,10.5){{\scriptsize$4n\!-\!3$}}
\put(29.8,9.95){\vector(-1,0){0.5}} \put(32.4,9.4){\circle*{1.15}}
\put(33,10.5){{\scriptsize$4n\!-\!4$}}
\put(34.7,6.7){\vector(-1,2){0.5}}
\put(20,5){%
\scaleput(10,0){\arc(5,0){320}} \curvedashes[1mm]{0,0.3,0.9}
\scaleput(10,0){\arc(5,0){-40}} }
\end{picture}
\end{split}
\end{equation*}
consists of $4n-3$ black vertices numbered sequentially and the
other $2n-2$ terms in (\ref{A}) are obtained from each other by
the successive action of three linear operators:
\begin{equation*}
\psi_m=C B A (\psi_{m-1})\,.
\end{equation*}
To describe the  action of the operators $A$, $B$, and $C$ we need
the following terminology. The white vertices divide a cyclic
graph $\Gamma$ into several \textit{arcs} endowed with black
vertices. The \textit{length} of an arc is, by definition, the
number of its black vertices.
\begin{description}
    \item[A] The operator $A$ acts successively on the black
    vertices of $\Gamma$ by inverting their color and  multiplying the result by $(-1)^l$,
     where $l$ is the number of black vertices preceding the
     inverted one. The labels on the vertices remain the same.
     Adding all such graphs up one gets $A(\Gamma)$.

    \item[B] The operator $B$ multiplies the graph $\Gamma$ by $1/k$, where $k$ is the number of arcs of nonzero length.

    \item[C] The operator $C$ acts successively on the nonzero-length arcs of
    $\Gamma$ by decreasing their length by 1. More precisely,
    it removes the first black vertex of an arc and multiplies the
    result by $(-1)^l$,
    where $l$ is the number of black vertices preceding the
    removed one.  If the removed vertex was labelled by $k$, then the labels less than $k$ remain intact while the labels greater than $k$
    are shifted down by 1.
    By summing over all nonzero-length arcs of
    $\Gamma$ one gets $C(\Gamma)$.
\end{description}

Let us now explain the relevance of the cocycles $\Psi'_{2n-1}\in
\bar{\mathcal{G}}^{(2)}/\mathcal{P}$ to the characteristic classes
of $Q$-manifolds. Recall that to any symmetric connection $\nabla$
one can associate a sequence of closed $2m$-forms
    \begin{equation*}
{P}_m=\mathrm{Str}(R^m)\in \Omega^{2m}(M)
\end{equation*}
constructed by the curvature tensor $R\in
\Omega^2(M)\otimes\frak{A}(M)$ of $\nabla$. The de Rham cohomology
class of ${P}_m$ is known as the $m$th Pontryagin character of
$M$. On any supermanifold, there exists a special symmetric
connection $\nabla$ with the property that ${P}_{2n-1}=0$ for all
$n\in \mathbb{N}$ (see e.g. \cite{LMS2}). This results in
triviality of all Pontryagin's characters with odd $m$'s, while
the other characters may well be nontrivial.  Denote by
$\mathcal{A}'$ the algebra of concomitants constructed by this
special connection. It is clear that
\begin{equation*}
\begin{array}{c}
    \phi(\Pi_{2n-1})=\binom{4n-3}{2n-1}{P}_{2n-1}(Q^{\otimes(4n-2)})\,.
    \end{array}
\end{equation*}
Since the last expression vanishes in $\mathcal{A}'$,  we have a
well-defined  cochain map $\phi:
{\mathcal{G}}/\mathcal{P}\rightarrow \mathcal{A}'$, which takes
$\Psi_{2n-1}'$ to the  $\delta$-cocycle
${A}_{2n-1}=\phi(\Psi'_{2n-1})\in \mathcal{A}'$. Following the
prescriptions above, one can easily write down explicit
expressions for the scalar concomitants ${A}_{2n-1}$ with small
$n$'s. To do this in a compact way, it is convenient to identify
the covariant derivative  $Q_1=\nabla Q \in \mathcal{A}'$ with the
right endomorphism $\Lambda \in \frak{A}(M)$ defined by the rule:
$\Lambda(X)=\nabla_X Q$ for all $X\in \frak{X}(M)$. Similarly,
define the right endomorphism $\mathrm{R}\in \frak{A}(M)$ by
setting $\mathrm{R}(X)=\frac12 R(X,Q,Q)$. With these definitions
we have
\begin{equation*}
\begin{array}{l}
{A}_{1} = \mathrm{Str}(\Lambda)\,, \\[3mm]
{A}_{3} = \mathrm{Str}(\Lambda^{5} + 5 \mathrm{R}\Lambda^{3} + 10 \mathrm{R}^{2} \Lambda)\,,  \\[3mm]
{A}_{5} = \mathrm{Str}(\Lambda^{9} + 9 \mathrm{R}\Lambda^{7} + 18
\mathrm{R}^{2} \Lambda^{5}
+ 9 \mathrm{R} \Lambda \mathrm{R} \Lambda^{4} + 9 \mathrm{R} \Lambda^2 \mathrm{R} \Lambda^{3} + \\[3mm]
\qquad \ \ 45 \mathrm{R}^3 \Lambda^3  + 21 \mathrm{R}^2 \Lambda \mathrm{R} \Lambda^{2} + 15 \mathrm{R}^2 \Lambda^2 \mathrm{R} \Lambda + 3 \mathrm{R} \Lambda \mathrm{R} \Lambda \mathrm{R} \Lambda + 126\mathrm{R}^{4} \Lambda)\,. \\[3mm]
\end{array}
\end{equation*}
The infinite sequence of $\delta$-cohomology classes $[A_{2n-1}]$
is called the $A$-series of characteristic classes \cite{LMS2}. In
the particular case of homological vector fields on $NQ$-manifolds
of degree 1 (see e.g.  \cite{Vaintrob1}, \cite{Roy}, \cite{Sev})
the corresponding characteristic classes of $A$-series add up to
the secondary characteristic classes of Lie algebroids \cite{F1},
\cite{F2}.

\subsection{The cohomology of $\bar{\mathcal{G}}^{(3)}$} It was found in
\cite{LMS2} that the computation of $H(\bar{\mathcal{G}}^{(3)})$
is essentially equivalent to the computation of the stable
cohomologies of a  Lie algebra of formal vector fields with tensor
coefficients. The latter was done by Fuks in \cite{Fu}. Before
stating the result let us note that the graphs spanning $
\bar{\mathcal{G}}^{(3)}$ may have at most one outgoing leg. Since
the coboundary operator does not mix graphs with different numbers
of incoming and outgoing legs, we have the following
decomposition:
\begin{equation}\label{sum}
    H(\bar{\mathcal{G}}^{(3)})=
    \left(\bigoplus_{q,n}
    H^q(\bar{\mathcal{G}}_{1,n}^{(3)})\right)\oplus \left(\bigoplus_{p,m} H^p(\bar{\mathcal{G}}_{0,m}^{(3)})\right)\,.
\end{equation}
The space $\bar{\mathcal{G}}^{(3)}_{1,n}$ is spanned by tree
connected graphs with $n$ incoming and 1 outgoing legs, while the
graphs from $\bar{\mathcal{G}}^{(3)}_{0,m}$ contain exactly one
cycle, $m$ incoming and no outgoing legs.

\begin{theorem} All nontrivial groups in the
sum (\ref{sum}) have the following dimensions:
\begin{equation*}
\dim H^{n-1}(\bar{\mathcal{G}}_{1,n}^{(3)})=(n-1)!\,,\qquad \dim
H^n(\bar{\mathcal{G}}^{(3)}_{0,n})=(n-1)!\,.
\end{equation*}
\end{theorem}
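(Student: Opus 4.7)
The plan is to reduce the calculation to Fuks' classical computation of the stable cohomology of the Lie algebra of formal vector fields with tensor coefficients, following the strategy already alluded to in the paper. The key point is to interpret $\bar{\mathcal{G}}^{(3)}$ as a Chevalley--Eilenberg type complex. A black vertex of valency $n+1$ (with $n$ incoming and one outgoing edge) corresponds to the symmetric $(1,n)$-tensor $Q_n$ with $n\geq 2$, and the differential formula $(\delta Q_n)(X_1,\ldots,X_n)=O_S(X_1,\ldots,X_n)$ from (\ref{dQ}) expresses $\delta Q_n$ as a sum of compositions $Q_{k+1}\circ Q_{n-k}$. This is precisely the Koszul dual of a Lie-type quadratic structure, so the differential graded algebra $(\mathcal{Q},\delta)$ can be viewed as encoding an $L_\infty$-structure with trivial unary bracket, with the relations (\ref{QQ}) playing the role of the higher Jacobi identities. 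Under this identification, trees in $\bar{\mathcal{G}}^{(3)}_{1,n}$ parametrize cochains valued in the adjoint module, while one-loop graphs in $\bar{\mathcal{G}}^{(3)}_{0,n}$ parametrize traces, i.e.\ cochains valued in the trivial (scalar) module.

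Next, I would perform an Euler-characteristic book-keeping argument to pin down the only possible degrees in which the cohomology may sit. For a tree in $\bar{\mathcal{G}}^{(3)}_{1,n}$ with $k$ internal black vertices of valencies $m_1,\ldots,m_k\geq 3$, the identity $\sum m_i = n+2k-1$ forces the average valency $(n+2k-1)/k$ to equal $3$ exactly when $k=n-1$, so a top-dimensional stratum of trivalent (binary) trees exists precisely in degree $n-1$. The analogous count for one-loop graphs in $\bar{\mathcal{G}}^{(3)}_{0,n}$ pins the top stratum to degree $n$. Lower-degree strata turn out to be acyclic relative to the top one, so the cohomology concentrates in the asserted degrees.

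Restricted to this top stratum, the combinatorial content of the differential is exactly the Jacobi identity for the binary bracket obtained from a single trivalent vertex. Consequently, $H^{n-1}(\bar{\mathcal{G}}^{(3)}_{1,n})$ is the arity-$n$ component of the Koszul dual of the commutative operad, which is the Lie operad $\mathrm{Lie}(n)$, of dimension $(n-1)!$. The one-loop case is treated in the same way after passing to cyclic (trace) cochains: cyclic invariants of $\mathrm{Lie}(n)$-elements again have dimension $(n-1)!$. These two computations match Fuks' theorem \cite{Fu} on $H^{*}(W_\infty;\mathrm{ad})$ and $H^{*}(W_\infty;\mathbb{R})$ with appropriate tensor coefficients in the stable range, which is precisely the content we need.

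The main obstacle will be making the identification between the graph complex and the Chevalley--Eilenberg complex fully rigorous on the nose. Two delicate points stand out. First, one must verify that the full symmetry of the black vertices, together with the orientation convention on graphs, matches the sign and symmetry conventions of Chevalley--Eilenberg cochains for an $L_\infty$-algebra with an odd generator $Q$; every edge carries a parity from $Q$, and these signs must be tracked so that the combinatorial differential $\partial$ coincides with the Chevalley--Eilenberg differential. Second, in the cyclic subcomplex one must show that the single cycle really corresponds to the trace pairing and not to some higher coinvariant; here the exclusion of edges joining univalent black vertices (i.e.\ the integrability condition $Q^2=0$) is essential to ensure that the trace is well-defined. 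Once these bookkeeping issues are resolved, the result reduces verbatim to Fuks' computation, and the stated dimensions $(n-1)!$ follow.
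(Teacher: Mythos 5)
Your proposal follows essentially the same route as the paper: the paper offers no self-contained proof of this theorem, but simply records that (as shown in \cite{LMS2}) the computation of $H(\bar{\mathcal{G}}^{(3)})$ is equivalent to computing the stable cohomology of a Lie algebra of formal vector fields with tensor coefficients, and then invokes Fuks \cite{Fu}; your reduction to a Chevalley--Eilenberg complex and ultimately to \cite{Fu} is the same reduction. Your added bookkeeping is correct as far as it goes: for a tree in $\bar{\mathcal{G}}^{(3)}_{1,n}$ with $k$ vertices one indeed has $\sum m_i=n+2k-1$, so the all-trivalent stratum sits in top degree $k=n-1$ (and $k=n$ for one-loop graphs), and the quotient of trivalent trees by the image of $\partial$ from the one-four-valent-vertex stratum is the Jacobi relation, giving $\mathrm{Lie}(n)$ and its cyclic analogue, each of dimension $(n-1)!$ --- consistent with the basis $B_n$, $C_n$ exhibited in the paper. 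The one place where you overstate what your argument delivers is the sentence ``lower-degree strata turn out to be acyclic relative to the top one'': the Euler-characteristic count only locates the top stratum, it does not show that cohomology vanishes below it, and that vanishing is the entire nontrivial content of the theorem. As long as you attribute that step squarely to Fuks' theorem (as the paper does) rather than to the stratification, the proposal is sound; presented as an independent deduction it would be a gap.
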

As a basis of the nontrivial cocycles one can take the graphs
depicted in Fig. \ref{B&C}. By definition, the $B$-series of
characteristic classes is spanned by the tree graphs with  $n$
incoming legs and $n-1$ trivalent vertices. The left-most leg of these
graphs is labelled by 1 and the labels on the other $n-1$ incoming
legs may  be chosen arbitrary. The total number of different
decorated graphs of type $B_n$ is thus  $(n-1)!$. The graphs of
$C$-series have the form of a cycle composed of $n$ trivalent
vertices with $n$ incoming legs. Since the cyclic permutations of
labels on the incoming legs preserve the isomorphism class of a
decorated graph $C_n$, there are exactly $(n-1)!$ different
decorations.

\begin{figure}[ht!]
\begin{center}
{
\unitlength 1mm 
\begin{picture}(75,15)
\put(-20,-1.5){$B_n=$} \put(-6.5,0){\vector(1,0){4}}
\put(-6.5,0){\line(1,0){6.5}} \put(-8,-1.5){{\scriptsize${}_1$}}
\put(0,-6.5){\line(0,1){6.5}} \put(0,-6.5){\vector(0,1){4}}
\put(0,0){\circle*{1.5}} \put(-1,1.5){{\scriptsize$1$}}
\put(0,0){\line(1,0){8}} \put(0,0){\vector(1,0){4.5}}
\put(8,-6.5){\line(0,1){6.5}} \put(8,-6.5){\vector(0,1){4}}
    \put(8,0){\circle*{1.5}}
\put(7,1.5){{\scriptsize$2$}} \put(8,0){\line(1,0){8}}
\put(8,0){\vector(1,0){4.5}} \put(16,-6.5){\line(0,1){6.5}}
\put(16,-6.5){\vector(0,1){4}}
    \put(16,0){\circle*{1.5}}
\put(15,1.5){{\scriptsize$3$}} \put(16,0){\line(1,0){4}}
\put(16,0){\vector(1,0){3}} \put(20.5,0){{\scriptsize$\ldots$}}
\put(26,0){\line(1,0){4}} \put(26,0){\vector(1,0){2.5}}
    \put(30,0){\circle*{1.5}}
\put(29,1.5){{\scriptsize$n$}} \put(30,-6.5){\line(0,1){6.5}}
\put(30,-6.5){\vector(0,1){4}} \put(30,0){\line(1,0){6.5}}
\put(30,0){\vector(1,0){4}}
\end{picture}
}

{
\unitlength 1.5mm
\begin{picture}(0,8)
 \thinlines
\put(40,2){%
\renewcommand{\xscale}{0.65}
\renewcommand{\xscaley}{-1}
\renewcommand{\yscale}{0.4}
\renewcommand{\yscalex}{0.6}
\scaleput(0,15){\arc(4.99,0.35){300}}
\curvedashes[1mm]{0,0.3,0.9}
\scaleput(0,15){\arc(5,0){-55}}%
}
\put(10,7){$C_n=$}
\put(19,7.5){\circle*{1.05}}
\put(17.25,8){{\scriptsize$1$}} \put(19,2.95){\line(0,1){4.55}} \put(19,2.95){\vector(0,1){2.8}}
\put(20,10){\vector(3,4){0.1}}
\put(21.5,11){\circle*{1.05}}
\put(21.5,12){{\scriptsize$2$}} \put(21.5,6.55){\line(0,1){4.55}} \put(21.5,6.55){\vector(0,1){2.8}}
\put(24.5,11.55){\vector(1,0){0.1}}
\put(24,4.5){\circle*{1.05}}
\put(24.5,3){{\scriptsize$n$}} \put(24,-0.05){\line(0,1){4.55}} \put(24,0.05){\vector(0,1){2.8}}
\put(27,4.55){\vector(-4,-1){0.1}}
\put(26.9,11.35){\circle*{1.05}}
\put(27.1,12.25){{\scriptsize$3$}} \put(26.9,6.8){\line(0,1){4.25}} \put(26.9,6.8){\vector(0,1){2.8}}
\put(29.8,6){\circle*{1.05}}
\put(30.75,5.5){{\scriptsize$n\!-\!1$}} \put(29.8,1.45){\line(0,1){4.55}} \put(29.8,1.45){\vector(0,1){2.8}}
\put(20.9,5.5){\vector(-3,2){0.1}}
\end{picture}
} \caption{Basis cocycles of series $B$ and $C$.}\label{B&C}
\end{center}
\end{figure}

The closedness of the graphs $B_n$ and $C_n$ readily  follows from
the fact that the coboundary operator  $\partial$ annihilates the
trivalent black vertex as is seen from Fig. \ref{delta action}. So
any trivalent graph is a cocycle. Interestingly enough the
vertices of valency higher than 3 do not contribute to the
cohomology. In analytical terms this means that the corresponding
tensor cocycles of ${\mathcal{A}}$ can be chosen to involve no
more than second covariant derivatives of the homological vector
field. Using the definition of the basis concomitants (\ref{basis
vertices}), one can easily assign the tensor expressions for the
graphs in Fig. \ref{B&C}. Let us interpret the basis generator
$Q_2$, which corresponds to the trivalent black vertex, as a
$C^{\infty}(M)$-module homomorphism $\frak{X}(M) \rightarrow
\frak{A}(M)$ that takes a vector field $X$ to the right
endomorphism $Q_2(X)$:
\begin{equation*}
Q_2(X)(Y)=Q_2(X,Y)\qquad \forall Y\in
\frak{X}(M)\,.
\end{equation*}
Now, identifying the $(1,n+1)$-tensors with the homomorphisms
$\frak{X}(M)^{\otimes n}\rightarrow \frak{A}(M)$, we can
write\footnote{By abuse of notation, we use the same symbol for a
graph cocycle and its image under the map (\ref{hom}).}
\begin{equation*}
    B_n(X_1,X_2,\ldots,X_n)=(-1)^{\sum_{k}\epsilon(X_{2k})}Q_2(X_n)Q_2(X_{n-1})\cdots Q_2(X_1)\,.
\end{equation*}
The concomitants $C_n$ are then defined by
\begin{equation*}
    C_n(X_1,X_2,\ldots,X_n)=\mathrm{Str}B_n(X_1,X_2,\ldots,X_n)\,.
\end{equation*}
Permuting the arguments $X_1$, $X_2$,\ldots ,$X_n$ one gets the
basis of nontrivial $\delta$-cocycles.

\subsection{The cohomology of $\bar{\mathcal{G}}^{(4)}$} Below we prove that the complex  $\bar{\mathcal{G}}^{(4)}$
is acyclic and this gives  the main result of the paper.

Let us introduce the following terminology.  By
\textit{multivalent vertices} we will mean the black and white
vertices of valency $\geq 3$. Besides multivalent vertices, each
graph $\Gamma\in \bar{\mathcal{G}}^{(4)}$ is allowed to have some
number of univalent and bivalent black vertices. A \textit{branch}
is, by definition, a connected subgraph of $\Gamma$ given by a
maximal string of bivalent vertices bounded by one or two
multivalent vertices. Graphically, a typical branch looks like
\begin{align}\label{branch}
\begin{split}
\unitlength 0.9mm 
\begin{picture}(60,10)(-25,-5)
\put(-32,0){$\gamma_k^{\alpha\beta}=$}
\put(-4.85,1){\line(1,0){8.15}} \put(-4,1){\vector(1,0){3}} \put(11,1){\line(1,0){7.85}} \put(12,1){\vector(1,0){4.5}}
\put(-11,-5){\line(3,4){3.9}}\put(-7.85,-5){\line(1,3){1.55}}\put(-1,-5){\line(-3,4){3.9}}
\put(-11,-5){\vector(3,4){2.6}}\put(-7.85,-5){\vector(1,3){1.1}}\put(-1,-5){\vector(-3,4){2.6}}
\put(-6,-4.5){$_{\ldots}$}
\put(-6,1){\circle{2.2}}
\put(-7.25,0.25){\scriptsize$\times$}
\put(-7.25,4){$_\alpha$}
\put(1,1){\circle*{1.5}}
\put(13,1){\circle*{1.5}}
\put(4.5,0.75){$\ldots$}
\put(20,1){\circle{2.2}}
\put(18.75,0.25){\scriptsize$\times$}
\put(19.25,4){$_\beta$}
\put(15,-5){\line(3,4){3.9}}\put(18.25,-5){\line(1,3){1.55}}\put(25,-5){\line(-3,4){3.9}}
\put(15,-5){\vector(3,4){2.6}}\put(18.25,-5){\vector(1,3){1.1}}\put(25,-5){\vector(-3,4){2.6}}
\put(20,-4.5){$_{\ldots}$}
\put(.5,3.5){{\scriptsize$_1$}}
\put(12.5,3.5){{\scriptsize$_{k}$}}
\put(21.15,1){\line(1,0){7}}
\put(22,1){\vector(1,0){4}}
\end{picture}
\end{split}
\end{align}
The marks $\alpha,\beta \in \{\bullet,\circ,\varnothing\}$ denote
three possible types of boundary vertices (the symbol
$\varnothing$ is used to indicate that the branch ends with a
leg).

\begin{lemma} For any graph of $\bar{\mathcal{G}}^{(4)}$ at least one of the following statements is true:
\begin{enumerate}
    \item there is a branch of nonzero length;
    \item there is an edge joining black and white multivalent vertices;
    \item there is a leg adjacent to a white vertex.
\end{enumerate}\label{lemma}
\end{lemma}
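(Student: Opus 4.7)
I would prove this by contradiction: suppose $\Gamma\in\bar{\mathcal{G}}^{(4)}$ satisfies none of (1), (2), (3), and show that $\Gamma$ is then forced to lie in one of $\bar{\mathcal{G}}^{(1)}$, $\bar{\mathcal{G}}^{(2)}$, $\bar{\mathcal{G}}^{(3)}$, contradicting the splitting $\bar{\mathcal{G}}=\bigoplus_{i=1}^4\bar{\mathcal{G}}^{(i)}$. The argument is a structural case analysis in which I successively eliminate the admissible vertex types of $\Gamma$.

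First I would eliminate the bivalent vertices. A bivalent vertex lies in a unique maximal chain of bivalents, and by the $\mathcal{A}$-graph conventions that chain is either closed --- in which case it is an entire connected component, since both ports of every bivalent are spent inside it, so connectedness forces $\Gamma$ to be the chain and to lie in $\bar{\mathcal{G}}^{(2)}$, a contradiction --- or it is bounded by multivalent vertices and legs, producing a branch of positive length that contradicts (1). The surviving vertex types are then univalent black, multivalent black, and multivalent white. The sole edge of a univalent black cannot meet another black vertex (by the $\mathcal{A}$-graph exclusion), so it is either a free leg, which isolates the vertex and by connectedness collapses $\Gamma$ to $\bullet\!\to\,\in\bar{\mathcal{G}}^{(1)}$, or it attaches to a multivalent white. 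I would treat the latter attachment as a ``decorated leg at a white'' --- the standard reading in graph complexes where $Q_0=Q$ plays the role of a decorated leg --- so that it also contradicts (3). Hence no univalent blacks remain either.

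Only multivalent blacks and multivalent whites are left. Condition (2) failing forbids direct edges between them, and with no bivalents to mediate the two colours form disjoint subgraphs; connectedness picks one. If only multivalent blacks are present, $\Gamma\in\bar{\mathcal{G}}^{(3)}$, contradiction. If only multivalent whites, a simple port count closes the case: each white contributes $1$ outgoing and $\ge 3$ incoming half-edges, so the outgoing supply $n_W$ falls short of the incoming demand $\ge 3n_W$, and the surplus incoming ports must terminate in free legs on whites, which (3) failing forbids --- forcing $n_W=0$ and the empty graph, the final contradiction.

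The main obstacle, and the one delicate interpretive step, is the treatment of $Q_0$-attachments to white vertices in the univalent-black elimination: the argument requires (3) to be read as including such attachments among the ``legs adjacent to whites'', consistent with the standard graph-complex convention. Once this reading is in force the whole proof is a clean combinatorial exhaustion driven purely by the valency bounds on basis concomitants ($\ge 3$ for multivalent vertices, $\ge 4$ for whites) and by the exclusion of univalent-black-to-black edges in $\mathcal{A}$-graphs.
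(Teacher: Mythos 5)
Your overall strategy---assume all three conditions fail and force $\Gamma$ into $\bar{\mathcal{G}}^{(1)}$, $\bar{\mathcal{G}}^{(2)}$ or $\bar{\mathcal{G}}^{(3)}$---is the same as the paper's, and your elimination of bivalent chains and of univalent black vertices whose edge is a free leg is fine. But there is a genuine gap in the one case you yourself flag as ``delicate'': a connected graph built from multivalent white vertices and univalent black vertices, with no legs. You dispose of the univalent blacks attached to whites by declaring such an attachment to be ``a leg adjacent to a white vertex'' and invoking the failure of (3). That reading is not available here: the paper defines a leg as an edge with a free end, and an edge from a univalent black vertex ($Q_0=Q$) into a white vertex is an internal edge, not a leg. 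The distinction is not cosmetic---in the proof of the acyclicity theorem the quantity $n_3$ in the eigenvalue $n_1+n_2+n_3$ of $\Delta=hd_0+d_0h$ counts genuine legs adjacent to whites (a zero-length branch $\gamma_0^{\circ\varnothing}$), whereas a $Q$ hanging on a white vertex is not a branch at all and contributes nothing to $\Delta$; so the lemma must be proved with the literal reading of (3), otherwise the invertibility of $\Delta$ does not follow.

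With the literal reading, your final port count breaks down: the surplus incoming half-edges of the white vertices need not terminate in free legs---they can terminate in univalent black vertices, and such graphs (for instance a cycle of two $5$-valent whites, each carrying three $Q$'s) have no branches of nonzero length, no black--white multivalent edges and no legs. The paper closes exactly this case by a different mechanism: the symmetries $(\mathrm{S}1)$--$(\mathrm{S}4)$ of the white vertices (the generalized Bianchi identities) imply that a white vertex with three or more univalent black vertices attached vanishes identically in $\mathcal{G}$, so each white carries at most two $Q$'s; the counting argument then forces every white vertex to be exactly the ``bivalent white vertex'' of (\ref{bivalent}) and $\Gamma$ to be a cycle of such vertices, which lies in $\bar{\mathcal{G}}^{(2)}$ rather than $\bar{\mathcal{G}}^{(4)}$. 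Your proof never invokes these vertex-symmetry relations, and without them the statement fails at the level of unreduced graphs; this step needs to be supplied.
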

\begin{proof}
Suppose a connected graph  $\Gamma\in \bar{\mathcal{G}}^{(4)}$
fails to meet the first and second  conditions. Then, there are
two options: either $\Gamma$ involves only black vertices  or it
consists of white multivalent and black univalent vertices. In the
former case $\Gamma$ must belong to $\bar{\mathcal{G}}^{(3)}$ and
not to $\bar{\mathcal{G}}^{(4)}$. (In the absence of white
vertices, black univalent vertices cannot coexist with black
multivalent ones in a connected $\mathcal{A}$-graph.) Turning to
the second possibility, we note that the symmetries  of white
vertices (see Fig. \ref{v-symmetries}) lead to  the identities
\begin{equation*}
\begin{split}
\unitlength 1mm 
\begin{picture}(50,15)(0,0)
\put(11,0){\line(1,1){7.5}}\put(19,0){\line(0,1){7.25}}\put(27,0){\line(-1,1){7.5}}
\put(11,0){\vector(1,1){3.75}}\put(19,0){\vector(0,1){3.75}}\put(27,0){\vector(-1,1){3.75}}
\put(19,8){\circle{1.5}} \put(19,8.75){\line(0,1){5.5}}
\put(19,8.75){\vector(0,1){3.75}} \put(11,0){\circle*{1.5}}
\put(19,0){\circle*{1.5}} \put(27,0){\circle*{1.5}}
\put(13,0){{\scriptsize$\ldots$}}
\put(21,0){{\scriptsize$\ldots$}} \put(6,0){{\scriptsize$\ldots$}}
\put(29,0){{\scriptsize$\ldots$}} \put(30,8){$=0\,,$}
\end{picture}
\end{split}
\end{equation*}
where the dots stand for other possible incoming edges/legs.   As
a consequence no more than two univalent black vertices may join
with a white vertex. Having no legs, the graph $\Gamma$ must take
the form of a cyclic graph composed of the bivalent white vertices
(\ref{bivalent}). It remains to note that all such graphs belong
to $\bar{\mathcal{G}}^{(2)}$ rather than
$\bar{\mathcal{G}}^{(4)}$.
\end{proof}

\begin{theorem}
The complex $\bar{\mathcal{G}}^{(4)}$ is acyclic.
\end{theorem}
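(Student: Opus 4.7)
The plan is to exhibit an explicit contracting homotopy $h$ on $\bar{\mathcal{G}}^{(4)}$ using Lemma \ref{lemma} to locate, in every graph, a distinguished local structure at which the homotopy can act. Fix, once and for all, a total ordering on the three types of special structures singled out by the lemma---nonzero-length branches, edges joining multivalent vertices of different colour, and legs adjacent to white vertices---and within each type a canonical representative (by leg label, edge label, or branch position). This assigns to every $\Gamma \in \bar{\mathcal{G}}^{(4)}$ a unique \emph{marked site}, and $h$ will be defined locally at that site.

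For each case, $h$ performs the local inverse of an appropriate term of $\partial$. If the marked site is a nonzero-length branch, $h(\Gamma)$ deletes the leading bivalent black vertex of the branch, thereby inverting the action of $\partial$ on the adjacent bivalent white vertex (Fig.~\ref{delta action}, third equation). If it is an edge between a multivalent black vertex and a multivalent white vertex, $h(\Gamma)$ contracts that edge, absorbing the black vertex into the corresponding argument slot of the white vertex; this inverts the principal term $R_{n+1}(Q, X_1, \ldots, X_n)$ of (\ref{dR}). If it is a leg adjacent to a white vertex, $h(\Gamma)$ inserts a short bivalent-black cap on the leg, whose $\partial$ regenerates $\Gamma$. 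In all three cases $h$ lowers the number of black vertices by exactly one, so $h$ has the correct degree $-1$.

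The core of the proof is the verification of $\partial h + h \partial = \mathrm{id}$. The diagonal contribution, in which $\partial$ acts on precisely the vertex introduced by $h$ and undoes the local move, yields exactly $\Gamma$. The off-diagonal terms split into two kinds: those in which $\partial$ acts away from the marked site match, graph for graph, against terms of $h \partial \Gamma$ in which the same distant $\partial$-action leaves the marked site unchanged; those in which $\partial$ produces a new feature that might compete for the role of marked site cancel in pairs under the fixed total ordering, which deterministically selects the marked site in each resulting graph. The sign bookkeeping relies on the orientation convention of Section \ref{GrCom} together with the white-vertex symmetries $(\mathrm{S}1)$--$(\mathrm{S}4)$.

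The main technical obstacle is controlling the ``$\cdots$'' terms in (\ref{dR}), which are at least bilinear in $R_k$'s with $k<n$ and depend on the a priori unknown tensors $K_n$. The natural device is to filter $\bar{\mathcal{G}}^{(4)}$ by a suitable complexity (e.g., the number of multivalent white vertices, or their total valency): these auxiliary terms strictly decrease the filtration, so that on the associated graded complex only the principal terms of (\ref{dQ}) and (\ref{dR}) survive and the homotopy identity holds verbatim. Acyclicity of the associated graded then yields acyclicity of $\bar{\mathcal{G}}^{(4)}$ by a standard convergence argument for bounded-below increasing filtrations, completing the proof and establishing the paper's main structural claim $H_{\mathrm{st}}(\mathcal{A}) = \mathcal{Q}'_2 / (\mathcal{Q}'_2 \cap \delta \mathcal{Q}'_3)$.
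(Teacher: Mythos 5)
Your overall strategy --- use Lemma \ref{lemma} to locate a distinguished local structure in every graph, build a degree $-1$ homotopy acting there, and filter away the unknown tails of (\ref{dR}) --- is in the right spirit, but the homotopy you construct does not work as described, for two concrete reasons. First, two of your three local moves do not invert any term of the differential. The leading term $R_{n+1}(Q,X_1,\ldots,X_n)$ of (\ref{dR}) attaches a \emph{univalent} black vertex $Q=Q_0$ to the white vertex, so ``contracting an edge between a multivalent black vertex and a multivalent white vertex'' is not its inverse (no term of $\partial$ ever creates such an edge); and ``inserting a bivalent-black cap on a leg'' \emph{raises} the number of black vertices, i.e.\ has degree $+1$, contradicting your own statement that $h$ lowers it by one --- the correct move near a white-adjacent leg would be to \emph{delete} a bivalent $Q_1$-vertex sitting on that leg, inverting the term $R_n(\ldots,Q_1(X_k),\ldots)$. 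Second, and more seriously, the identity $\partial h+h\partial=\mathrm{id}$ for a homotopy that acts only at a single, ordering-selected marked site is precisely the point that needs proof: when $\partial$ acts away from the marked site it generically creates new candidate sites and changes which site your total ordering selects, and the asserted pairwise cancellation of these competing terms is not justified and does not happen in any evident way.

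The paper's proof sidesteps both problems. It filters $\bar{\mathcal{G}}^{(4)}$ by the number of multivalent \emph{and univalent} vertices, so that on the associated graded complex the surviving differential $d_0$ is exactly the insertion of bivalent black vertices into branches, formula (\ref{d0}); all three structures of Lemma \ref{lemma} are then uniformly described as branches $\gamma_k^{\alpha\beta}$ (possibly of length $k=0$ with a white endpoint) on which $d_0$ acts nontrivially. The homotopy (\ref{h}) shortens branches and is summed over \emph{all} branches of the graph rather than one chosen site; the price is that $hd_0+d_0h=\Delta$ is not the identity but the diagonal operator with eigenvalue $n_1+n_2+n_3$, and Lemma \ref{lemma} is invoked exactly to guarantee this eigenvalue is positive, so that $h\Delta^{-1}$ is an honest contracting homotopy for $d_0$. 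Acyclicity of $\bar{\mathcal{G}}^{(4)}$ itself then follows from the projective-limit description (\ref{prlim}). If you want to repair your argument, adopt this ``sum over all sites and divide by the count'' device; note also that your proposed filtration by the number (or total valency) of white vertices does not kill the leading terms of (\ref{dQ}) and (\ref{dR}) --- splitting a black vertex into two, or attaching a univalent $Q$ to a white vertex --- so the associated graded differential would still contain terms your local moves cannot handle.
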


\begin{proof} The complex $\bar{\mathcal{G}}^{(4)}$ admits a decreasing filtration
\begin{equation*}
    \bar{\mathcal{G}}^{(4)}= F_1\bar{\mathcal{G}}^{(4)} \supset F_2\bar{\mathcal{G}}^{(4)} \supset\cdots\supset
    F_\infty\bar{\mathcal{G}}^{(4)}=0\,,
\end{equation*}
where $F_k\bar{\mathcal{G}}^{(4)}$ spans the graphs with $k$ and
more multivalent and univalent vertices. Define the corresponding
spectral sequence  $\{E_r,d_r\}$. The zero differential $d_0$
increases the number of bivalent vertices, leaving the other
vertices intact. More precisely, if we prescribe the boundary
vertices of the branch (\ref{branch}) the degrees
$|\bullet|=|\varnothing|=0$ and $|\circ|=1$, then
\begin{equation}\label{d0}
d_0 \gamma_k^{\alpha\beta}=
\frac12(\!-1\!)^{|\alpha|}\!(1\!-\!(\!-1\!)^{k+|\alpha|+|\beta|})\gamma_{k+1}^{\alpha\beta}\,.
\end{equation}

We claim  that $E_1=\mathrm{Ker}d_0/\mathrm{Im}d_0=0$. To show
this define the operator $h:\bar{\mathcal{G}}^{(4)}\rightarrow
\bar{\mathcal{G}}^{(4)}$ that acts on the branches of the graph
$\Gamma$, one at a time, so that the result $h\Gamma$ is given by
a signed sum of graphs over all branches of $\Gamma$. To describe
the action of $h$ on an individual branch of $\Gamma$ we may
assume this branch to be oriented as in (\ref{branch}). (The
general case reduces to that by reordering the vertices of
$\Gamma$ with account of sign factor.) Then
\begin{equation}\label{h}
h \gamma_k^{\alpha\beta}=
\frac12(\!-1\!)^{|\alpha|}\!(1\!+\!(\!-1\!)^{k+|\alpha|+|\beta|})\gamma_{k-1}^{\alpha\beta}\,,
\end{equation}
where we assume that $h \gamma_0^{\alpha\beta}=0$. In other words,
the operator  $h$ either annihilates the branch or shortens it by
one bivalent vertex.  The $k-1$ bivalent vertices of the resulted
(nonzero) branch appear to be numbered in order and the labels on
the other vertices of the graph $\Gamma$ are reduced by one. In
this way the new graph gets an orientation.

It follows form relations (\ref{d0}) and (\ref{h}) that the
operator $\Delta=h d_0+d_0h$ is diagonal in the natural basis of
$\bar{\mathcal{G}}^{(4)}$. More precisely, $\Delta
\Gamma=(n_1+n_2+n_3)\Gamma$, where the eigenvalue depends on the
structure of the graph $\Gamma\in \bar{\mathcal{G}}^{(4)}$.
Namely, $n_1$ is the number of branches of nonzero length, $n_2$
is the number of edges joining black and white multivalent
vertices, and $n_3$ is the number of legs adjacent to white
vertices. By Lemma \ref{lemma}, the sum $n_1+n_2+n_3$ is strictly
positive and the operator $\Delta$ is invertible. Taking the
composition $h\Delta^{-1}$ as a contracting homotopy for $d_0$, we
see that the complex $(E_0,d_0)$ is acyclic and so is the graded
complex associated to the filtered complex
$\bar{\mathcal{G}}^{(4)}$. On the other hand,
\begin{equation}\label{prlim}
\bar{\mathcal{G}}^{(4)}=\lim_{k}\bar{\mathcal{G}}^{(4)}/F_k\bar{\mathcal{G}}^{(4)}\,,
\end{equation}
 where $\lim$ denotes the projective limit. (The last equality follows from
the fact that we consider graphs with finite number of edges and
vertices.) It remains  to note that acyclicity of the associated
graded complex of $\bar{\mathcal{G}}^{(4)}$ implies acyclicity of
the r.h.s. of (\ref{prlim}).
\end{proof}

We close this section with two theorems. The first theorem was
proved in \cite{LMS2}, while the second one summarizes the results
of the present paper.

\begin{theorem}
The characteristic classes of homological vector fields are
independent of the choice of symmetric connection.
\end{theorem}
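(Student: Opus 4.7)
The plan is to interpolate linearly between any two symmetric connections and show that the resulting family of universal cocycles is cohomologous at its endpoints. Let $\nabla^0$ and $\nabla^1$ be symmetric connections on $(M,Q)$ and set $T=\nabla^1-\nabla^0$, which is a symmetric $(1,2)$-tensor. The affine family $\nabla^t=\nabla^0+tT$ is symmetric for every $t\in[0,1]$ and determines basis concomitants $\{Q_n^t,R_n^t\}$, hence a cochain map $\phi_t:\mathcal{G}\to\mathcal{A}$. For a closed graph $g\in\ker\partial$ I would construct $H_t\in\mathcal{A}$ with $\tfrac{d}{dt}\phi_t(g)=\delta H_t$, so that
\begin{equation*}
\phi_1(g)-\phi_0(g)=\delta\int_0^1 H_t\,dt
\end{equation*}
is $\delta$-exact. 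Since $\phi_{\nabla}(g)$ ranges over universal cocycles as $g$ ranges over $\ker\partial$, this establishes independence of the characteristic class on $\nabla$.

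First I would differentiate the generators with respect to $t$. Because $T$ is a tensor, the derivatives $\dot Q_n^t$ and $\dot R_n^t$ are again concomitants of $(Q,\nabla^t)$, polynomial in $T$, the lower $Q_k^t$, $R_k^t$, and their $\nabla^t$-covariant derivatives; this follows from the defining formulas (\ref{Q-type}) and (\ref{R-type}) and from the fact that the difference of two symmetric connections is a tensor field. By the Leibniz rule applied to $\phi_t(g)$,
\begin{equation*}
\tfrac{d}{dt}\phi_t(g)=\sum_{v\in g}\phi_t(g_{\dot v}),
\end{equation*}
where $g_{\dot v}$ is obtained from $g$ by replacing the generator at $v$ with its $t$-derivative. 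The task then reduces to realising this vertex-wise variation as part of a chain-homotopy identity
\begin{equation*}
\tfrac{d}{dt}\phi_t=\delta\circ K_t+K_t\circ\partial
\end{equation*}
for an operator $K_t$ built by inserting $T$ at vertices and edges of the graph. Since $\partial g=0$, the second term drops out and $H_t=K_t(g)$ is the desired primitive.

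To construct $K_t$ I would match the variational formulas for $Q_n^t$ and $R_n^t$ against the explicit $\delta$-actions (\ref{dQ}) and (\ref{dR}), using the integrability condition (\ref{integrability}) together with the Bianchi-type symmetries $(\mathrm{S}1)$--$(\mathrm{S}4)$. Heuristically, $T$ is an infinitesimal tangent to the space of symmetric connections, and the graded commutator $[\delta,\nabla^t_T]$ produces exactly the vertex variations that appear in $\dot Q_n^t$ and $\dot R_n^t$; this identity is what makes the chain homotopy close vertex by vertex, and additivity in vertices extends it to all of $\mathcal{G}$.

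The main obstacle will be tracking signs and symmetry factors in the super setting, especially because the implicit corrections $K_n$ in (\ref{R-type}) are only defined inductively and their $t$-derivatives couple all levels of the filtration $\mathcal{Q}_n$. If this direct graph-theoretic bookkeeping becomes unwieldy, a clean fallback is to pass to the auxiliary $Q$-manifold $\widetilde M=M\times\mathbb{R}^{0|1}$ carrying $\widetilde Q=Q$ extended trivially and a symmetric connection $\widetilde\nabla$ whose restrictions to the two points of $\mathbb{R}^{0|1}$ are $\nabla^0$ and $\nabla^1$. The two inclusions $i_0,i_1:M\hookrightarrow\widetilde M$ are morphisms of $Q$-manifolds inducing identical maps on $\delta$-cohomology (they are related by Berezin integration along the odd direction), so functoriality of the assignment $g\mapsto\phi(g)$ on closed graphs delivers $[\phi_{\nabla^0}(g)]=[\phi_{\nabla^1}(g)]$ without any explicit manipulation of the vertex variations.
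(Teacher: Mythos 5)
First, a point of comparison: the paper does not actually prove this theorem --- it is imported from \cite{LMS2} (``The first theorem was proved in \cite{LMS2}''), so your attempt has to be judged against the standard argument rather than against a proof in the text. Your main route, interpolating $\nabla^t=\nabla^0+tT$ and seeking a chain homotopy $\tfrac{d}{dt}\phi_t=\delta\circ K_t+K_t\circ\partial$, is the right shape of argument, but the entire mathematical content sits in the step you leave as an assertion. The variations $\dot Q_n^t$, $\dot R_n^t$ are indeed concomitants linear in $T$, but nothing you write shows that they decompose as $\delta(\mbox{something involving }T)$ plus terms that reassemble into $K_t\circ\partial$; the heuristic that $[\delta,\nabla^t_T]$ ``produces exactly the vertex variations'' is precisely the identity that must be checked against Rels.~(\ref{dQ}) and (\ref{dR}), and it is not automatic: $\delta$ applied to a $T$-decorated graph produces terms such as $R_{n+1}(Q,\ldots)$ and $Q_1$-insertions that have no counterpart in $\dot Q_n^t$ unless the closedness $\partial g=0$ is invoked in a specific combinatorial way. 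As written this is a plan for a proof, not a proof.

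The fallback you offer is not a safe exit, because it is incorrect as stated. $\mathbb{R}^{0|1}$ is a purely odd line whose underlying reduced manifold is a single point; there are no ``two points of $\mathbb{R}^{0|1}$'' at which $\widetilde\nabla$ could restrict to $\nabla^0$ and $\nabla^1$, and there is only one inclusion $M\hookrightarrow M\times\mathbb{R}^{0|1}$. Worse, with $\widetilde Q=Q$ extended trivially there is no mechanism forcing two restriction maps to agree on $\delta$-cohomology --- that agreement is itself a homotopy-invariance statement equivalent to the theorem, so the argument is circular. The usable version of this trick takes $\widetilde M=M\times\mathbb{R}^{1|1}$ with even coordinate $t$ and odd coordinate $\theta$, sets $\widetilde Q=Q+\theta\,\partial/\partial t$ (still homological), and chooses a connection restricting to $\nabla^t$ on each slice; then $L_{\partial/\partial t}=[L_{\widetilde Q},L_{\partial/\partial\theta}]$ is chain homotopic to zero, which is exactly what kills the $t$-dependence in cohomology and produces the primitive $H_t$ that your first argument was missing. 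You should either carry out the vertex-by-vertex homotopy honestly or set up this $\mathbb{R}^{1|1}$ extension correctly; in its present form the proposal establishes neither.
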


\begin{theorem}
All the primitive characteristic classes of a homological vector
field $Q$ are grouped into the two infinite series $B$ and $C$
plus the $\delta$-cohomology class $[Q]$. For a special choice of
symmetric connection one can also  define the $A$-series of
characteristic classes.
\end{theorem}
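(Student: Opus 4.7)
The plan is to assemble the theorem from the four subcomplex computations already in place. Since $\bar{\mathcal{G}}=\bar{\mathcal{G}}^{(1)}\oplus\bar{\mathcal{G}}^{(2)}\oplus\bar{\mathcal{G}}^{(3)}\oplus\bar{\mathcal{G}}^{(4)}$ is a decomposition as a direct sum of subcomplexes (each piece is closed under $\partial$), cohomology distributes and
\begin{equation*}
H(\bar{\mathcal{G}})\;=\;H(\bar{\mathcal{G}}^{(1)})\,\oplus\,H(\bar{\mathcal{G}}^{(2)})\,\oplus\,H(\bar{\mathcal{G}}^{(3)})\,\oplus\,H(\bar{\mathcal{G}}^{(4)}).
\end{equation*}
Plugging in what has already been established in this section: $H(\bar{\mathcal{G}}^{(1)})\cong\mathbb{R}$, generated by the single-vertex graph $\bullet\!\!\rightarrow$; $H(\bar{\mathcal{G}}^{(2)})=0$ by \cite{LMS2}; $H(\bar{\mathcal{G}}^{(3)})$ has the bases $\{B_n\}_{n\ge 2}$ and $\{C_n\}_{n\ge 2}$ exhibited in Fig.~\ref{B&C}; and $H(\bar{\mathcal{G}}^{(4)})=0$ by the preceding theorem. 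Thus a basis of $H(\bar{\mathcal{G}})$ consists of the single cocycle $\bullet\!\!\rightarrow$ together with the $B_n$'s and $C_n$'s.

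Next I would transfer this to $\mathcal{A}$ through the cochain map $\phi:\mathcal{G}\to\mathcal{A}$. By definition, the primitive characteristic classes are precisely the image of $H(\bar{\mathcal{G}})$ under $H(\phi)$. The one-vertex generator of $H(\bar{\mathcal{G}}^{(1)})$ maps to $[Q]\in H(\mathcal{A})$. The tree-graph cocycles $B_n$ and the cyclic cocycles $C_n$ map to the tensor and scalar concomitants of the same name whose explicit form in terms of the trivalent-black-vertex concomitant $Q_2$ was recorded just above. This gives the two infinite series $B$ and $C$, together with $[Q]$, and accounts for every primitive characteristic class defined via $\mathcal{G}$.

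For the $A$-series I would invoke the special-connection construction of Sec.~4.2: choose a symmetric connection $\nabla$ on $M$ for which all odd Pontryagin characters $P_{2n-1}=\operatorname{Str}(R^{2n-1})$ vanish (such $\nabla$ exists by \cite{LMS2}). Then $\phi(\Pi_{2n-1})=\binom{4n-3}{2n-1}P_{2n-1}(Q^{\otimes(4n-2)})=0$ in the algebra of concomitants $\mathcal{A}'$ built from this $\nabla$, so $\mathcal{P}\subset\ker\phi$ and $\phi$ descends to a cochain map $\mathcal{G}/\mathcal{P}\to\mathcal{A}'$. Because $\bar{\mathcal{G}}^{(2)}$ is acyclic, each $\Pi_{2n-1}$ has a primitive $\Psi_{2n-1}\in\bar{\mathcal{G}}^{(2)}$; its image $\Psi'_{2n-1}$ in the quotient is a nontrivial cocycle by a standard long-exact-sequence argument, and $A_{2n-1}:=\phi(\Psi'_{2n-1})\in\mathcal{A}'$ furnishes the announced $A$-series.

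The only point requiring care is that nothing new enters beyond the four series. This is guaranteed by the decomposition above: any primitive class comes from a connected graph cocycle, and our computations show that every such cocycle is cohomologous to a linear combination of $\bullet\!\!\rightarrow$, $B_n$'s and $C_n$'s in the absolute complex, and additionally of $\Psi'_{2n-1}$'s in the relative complex $\mathcal{G}/\mathcal{P}$. The main conceptual obstacle, namely the possible existence of primitive cocycles involving the curvature generators $R_n$ (i.e.\ lying in the connected part of the ideal $\mathcal{R}$), is precisely what the acyclicity of $\bar{\mathcal{G}}^{(4)}$ rules out; the rest is bookkeeping of the previously known pieces.
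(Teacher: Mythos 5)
Your proposal is correct and follows essentially the same route as the paper: the theorem is stated there as a summary of Section 4, obtained exactly by combining $H(\bar{\mathcal{G}}^{(1)})\cong\mathbb{R}$ (giving $[Q]$), the acyclicity of $\bar{\mathcal{G}}^{(2)}$ together with the relative complex $\bar{\mathcal{G}}^{(2)}/\mathcal{P}$ (giving the $A$-series for the special connection), the Fuks computation of $H(\bar{\mathcal{G}}^{(3)})$ (giving the $B$- and $C$-series), and the acyclicity of $\bar{\mathcal{G}}^{(4)}$, then pushing forward along $H(\phi)$. Your closing observation that the acyclicity of $\bar{\mathcal{G}}^{(4)}$ is precisely what excludes primitive cocycles involving the curvature generators matches the paper's stated motivation for that theorem.
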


\subsection{Example} To show nontriviality of the constructed
characteristic classes we consider the homological vector field
associated to a Lie algebra $\mathcal{L}$. Let $\{t_a\}$ be a
basis in $\mathcal{L}$ with commutation relations
\begin{equation*}
    [t_a,t_b]=f_{ab}^ct_c\,.
\end{equation*}
Then the homological vector field on $\Pi \mathcal{L}$ reads
\begin{equation}\label{PL}
    Q=\frac12 c^bc^af_{ab}^d\frac{\partial}{\partial c^d}\,.
\end{equation}
By the definition of the parity reversing functor,
$\epsilon(c^a)=\epsilon(t_a)+1$. The linear space of functions on
$\Pi \mathcal{L}$ endowed with the differential $Q$ gives us a
model for the Chevaley-Eilenberg complex  of the Lie algebra
$\mathcal{L}$. Upon choosing a flat affine connection   on $\Pi
\mathcal{L}$, we see that the characteristic classes of $A$-series
are nothing but primitive elements of the Lie algebra cohomology:
\begin{equation*}
    {A}_{2n-1}=\mathrm{tr}(\mathrm{ad}_{a_1}\cdots
    \mathrm{ad}_{a_{2n-1}}) c^{a_1}\cdots c^{a_{2n-1}}\qquad
    \forall n\in \mathbb{N}\,.
\end{equation*}
Here $\mathrm{ad}_a=\{f_{ab}^c\}$ are the matrices of the adjoint
representation of $\mathcal{L}$.

The universal cocycles of $B$- and $C$-series are given by the
following ad-invariant tensors on $\Pi\mathcal{L}$:
\begin{equation*}
\begin{array}{l}
   \displaystyle  B_n=(\mathrm{ad}_{a_1}\cdots \mathrm{ad}_{a_n})^b_{a_{n+1}}
    dc^{a_1}\otimes \cdots \otimes dc^{a_{n+1}}\otimes \frac{\partial}{\partial
    c^b}\,,\\[5mm]
    C_n=\mathrm{tr}(\mathrm{ad}_{a_1}\cdots
    \mathrm{ad}_{a_n})dc^{a_1}\otimes \cdots\otimes dc^{a_n}\qquad
    \forall n\in \mathbb{N}\,.
\end{array}
\end{equation*}
Since any coboundary of (\ref{PL}) is necessarily proportional to
$c^a$, the tensor cocycles above are either zero or nontrivial.
For instance, if $\mathcal{L}$ is semi-simple, then the one-form
$C_1$ is zero, while the two-form $C_2$ is non-degenerate (the
Killing metric).

\end{document}